\newcounter{mycounter}
\definecolor{Gray}{gray}{0.9}
\pgfplotsset{compat=newest}
\newtheorem{theorem}{Theorem}
\newtheorem{rem}{Remark}
\newcommand{\tr}{\text{Tr}}
\newcommand{\bc}{\text{BackCom}\xspace}
\newcommand{\abc}{\text{AmBC}\xspace} 
\newcommand{\thr}{\text{th}}
\newcommand{\qh}{\mathbf{h}}
\newcommand{\qf}{\mathbf{f}}
\newcommand{\qs}{\mathbf{s}}
\newcommand{\qg}{\mathbf{g}}
\newcommand{\qr}{\mathbf{r}}
\newcommand{\qu}{\mathbf{u}}
\newcommand{\qw}{\mathbf{w}}
\newcommand{\qx}{\mathbf{x}}
\newcommand{\qS}{\mathbf{S}}
\newcommand{\qG}{\mathbf{G}}
\newcommand{\qW}{\mathbf{W}}
\newcommand{\qQ}{\mathbf{Q}}
\newcommand{\qI}{\mathbf{I}}
\newcommand{\qR}{\mathbf{R}}
\newcommand{\qP}{\mathbf{P}}
\newcommand{\qalpha}{\boldsymbol{\alpha}}
\newcommand{\qH}{\mathbf{H}}
\newcommand{\qU}{\mathbf{U}}
\definecolor{LightGray}{gray}{0.9}
\definecolor{MediumGray}{gray}{0.5}
\definecolor{DarkGray}{gray}{0.2}
\begin{document}

\title{Transmit Power Optimization for  Integrated Sensing and Backscatter Communication} 

\author{Shayan Zargari, Diluka Galappaththige, \IEEEmembership{Member, IEEE},  and  Chintha Tellambura, \IEEEmembership{Fellow, IEEE} 
\thanks{S. Zargari, D. Galappaththige, and C. Tellambura with the Department of Electrical and Computer Engineering, University of Alberta, Edmonton, AB, T6G 1H9, Canada (e-mail: \{zargari, diluka.lg, ct4\}@ualberta.ca). }  \vspace{-7mm}}

\maketitle

\begin{abstract} 
Ambient Internet of Things networks use low-cost, low-power backscatter tags in various industry applications. By exploiting those tags, we introduce the integrated sensing and backscatter communication (ISABC) system, featuring multiple backscatter tags, a user (reader), and a full-duplex base station (BS) that integrates sensing and (backscatter) communications. The BS undertakes dual roles of detecting backscatter tags and communicating with the user, leveraging the same temporal and frequency resources. The tag-reflected BS signals offer data to the user and enable the BS to sense the environment simultaneously.  We derive both user and tag communication rates and the sensing rate of the BS. We jointly optimize the transmit/received beamformers and tag reflection coefficients to minimize the total BS power. To solve this problem, we employ the alternating optimization technique. We offer a closed-form solution for the received beamformers while utilizing semi-definite relaxation and slack-optimization for transmit beamformers and power reflection coefficients, respectively. For example, with ten transmit/reception antennas at the BS, ISABC delivers a \qty{75}{\percent} sum communication and sensing rates gain over a traditional backscatter while requiring a \qty{3.4}{\percent} increase in transmit power. Furthermore, ISABC with active tags only requires a \qty{0.24}{\percent} increase in transmit power over conventional integrated sensing and communication.
\end{abstract}

\begin{IEEEkeywords}
Backscatter communication (\bc), Integrated sensing and communication (ISAC), Passive tags.
\end{IEEEkeywords}

\IEEEpeerreviewmaketitle
\section{Introduction}
Future Internet-of-Things (IoT) networks demand low-power, high-quality wireless connectivity, and precise, robust sensing capabilities \cite{Huawei_ambient, Huawei}. Ambient power-enabled (battery-free) IoT, a vibrant research area, has garnered attention, with 3GPP launching a dedicated study item \cite{Huawei_ambient, Huawei}. These networks link devices capable of autonomously sensing, collecting, and sharing environmental data, fostering real-world applications in smart homes, cities, autonomous vehicles, industrial IoT, healthcare, etc. \cite{HoangBook2020, Diluka2022, Rezaei2023Coding}. Such applications necessitate not only low-power communication but also advanced sensing functionalities. For instance, a smart-home temperature sensor can move and sense temperature in different locations, enabling the network to extract vital environmental information, such as range, velocity, or angle, for environment learning and mapping \cite{Diluka2023, Diluka2022}.

While ambient-IoT devices exhibit low power consumption and limited processing capabilities, they can help address the central challenge of enabling simultaneous communication and sensing. In this context, a groundbreaking concept introduced in \cite{Diluka2023} is known as \textit{Integrated Sensing and Backscatter Communications (ISABC)}. This paradigm shift merges the principles of integrated sensing and communication (ISAC) with the capabilities of backscatter communication (\bc), offering a solution that facilitates the concurrent execution of sensing and communication tasks in ambient power-enabled IoT networks.

Before proceeding to ISABC, we first briefly describe  ISAC and \bc.

\subsection{Integrated Sensing and Communication}

ISAC, or Integrated Sensing and Communication, represents a revolutionary shift from traditional network models, allowing concurrent sensing and communication tasks. Unlike conventional systems with segregated networks, ISAC seamlessly integrates these functions, which carry substantial implications for the transition beyond 5G into the domain of 6G \cite{Zhang2022, Wang2022}. This paradigm empowers devices to extract environmental insights from RF signals and reflections and facilitates innovative services like precise localization, activity tracking, object detection, urban traffic monitoring, and weather observations \cite{Zhang2022, Wang2022}.

Moreover, the environmental data acquired through ISAC enhances communication performance, enabling precise beamforming and rapid beam failure recovery. This sensing capability propels future IoT networks into the realm of perceptive networks, laying the groundwork for intelligence within the ISAC network and unlocking possibilities across various domains, including smart homes, cities, warehousing, healthcare, and beyond \cite{Zhang2022, Wang2022}.

The two categories of ISAC are  (i) Device-free ISAC and (ii) Device-based ISAC \cite{Liu2022ISAC}.
\begin{enumerate}
    \item \textit{Device-free ISAC}: This approach detects the sensing information of unregistered external targets (vehicles, animals, people, etc.). Unlike registered targets, these entities cannot transmit and/or receive sensing signals, making the sensing procedure independent of their transmission and/or reception capabilities.

    \item \textit{Device-based ISAC}: 
    Network-registered devices, including mobile phones, sensors, UAVs, etc., enable sensing functionality. The targets involved in sensing can transmit and/or receive signals, and the procedure relies on their transmission and/or reception. An illustrative case is wireless-based localization for locating mobile devices.
\end{enumerate}

Additionally, sensing in these ISAC categories can be subdivided based on the transmitter and sensing receiver configurations, i.e., mono-static, bi-static, and multi-static, as well as the type of sensing signal, i.e., active if the sensing receiver uses the reflected/diffracted signals of its own transmission or passive if it uses the received sensing signals from another transmitter \cite{Zhang2022, Wang2022}. In active sensing, the sensing receiver operates simultaneously while transmitting, i.e., in full-duplex (FD) mode \cite{Zhenyao2023}. In particular, self-interference (SI), a critical issue in FD operation, has a significant impact on sensing performance. For instance, in an FD ISAC system, the SI cancellation must be performed only for the direct signal coupling between the transceiver antennas, while preserving the target reflections \cite{Zhenyao2023}. Many developing SI cancellation approaches, including antenna isolation, analog cancellation, digital cancellation, and machine learning-based SI mitigation, can successfully suppress the SI \cite{Zhenyao2023, Mohammadi2023}.
 
\subsection{Backscatter Communication}
This technology,  especially for ambient-powered IoT networks, has drawn significant interest from both academic and industrial research communities \cite{HoangBook2020, Diluka2022, Rezaei2023Coding, Zargari9780612, Hakimi9877898}. It relies on passive tags devoid of active RF components, which communicate by reflecting external RF signals. This approach overcomes the limitations of battery-powered IoT devices, which often require frequent replacement or recharging, resulting in substantial maintenance costs, environmental concerns, and, in specific cases, safety hazards (e.g., wireless sensors in industries like power and petroleum). As a solution, the adoption of batteryless backscatter devices, such as passive tags, or devices with limited energy storage (semi-passive tags), holds promise for meeting the connectivity demands of future IoT networks and applications \cite{HoangBook2020, Diluka2022, Rezaei2023Coding}.

Due to using RF signals can be generated by dedicated or ambient sources, tags can be cost-effective, ultra-low-power devices (e.g.,  few \qty{}{\nW} to \qty{}{\uW}) \cite{HoangBook2020}. Moreover, \bc optimizes spectrum usage without the need for additional frequency spectrum allocation. However, the performance of ambient \bc  (\abc) can be hindered by interference from legacy signals. Nonetheless, employing a cooperative receiver/user capable of decoding both primary and backscatter data mitigates primary interference in \bc  \cite{Liang2022}.

\subsection{Integrated Sensing and Backscatter Communications}
While ISABC falls under the umbrella of device-based ISAC, it distinguishes itself from standard ISAC systems by substituting the sensing/radar target with backscatter tags to facilitate opportunistic sensing \cite{Diluka2023}. Table \ref{tab:comparison} provides a breakdown of the distinctions between these two. While ISAC may involve targets that neither transmit nor receive sensing signals, such as vehicles or birds, or devices that do, like mobile phones for wireless-based localization, ISABC stands out by exclusively utilizing backscatter tags. These tags provide environmental insights to the base station (BS) and furnish supplementary data to the user. To underscore the nuances, we note these key distinctions:
\begin{itemize}
\item ISABC uses the backscatter tag as a sensing instrument and a data provider. It can act as a sensor (e.g., monitoring temperature or humidity), conveying ambient information to the user. Simultaneously, the BS leverages the same tag signal to glean critical environmental metrics like range or velocity.

\item ISABC's hallmark lies in its capability to merge sensing and backscatter data. This convergence augments communication and sensing prowess and heightens computational demands due to the necessity for advanced decoding algorithms, especially those relying on successive interference cancellation (SIC).
\end{itemize}

In applications like smart homes, while tags can delineate the environment, the key idea is to exploit tag-reflected signals at the BS for enhanced sensing. This is achieved without incurring additional RF resources, escalating hardware expenses, or modifying tags.

\begin{table}[t]
    \centering
    \renewcommand{\arraystretch}{1.2} 
    \setlength{\tabcolsep}{10pt} 
    \captionsetup{font=small,labelfont={color=DarkGray,bf},textfont={color=DarkGray}}
    \caption{A comparison between ISAC and ISABC.}
    \begin{tabular}{lcc}
    \hline
    \rowcolor{LightGray}
    \textbf{Features} & \textbf{ISAC} & \textbf{ISABC} \\
    \hline
    Target & $\checkmark$ & $\times$ \\
    Tag & $\times$ &  $\checkmark$ \\
    Additional data at the user & $\times$ & $\checkmark$ \\
    Power allocation at the BS & $\checkmark$ & $\checkmark$ \\
    User decoding & Conventional & SIC \\
    Sensing signal & Active/Passive & Active \\
    \hline
    \end{tabular}
    \vspace{-0mm}
    \label{tab:comparison}
\end{table}

\subsection{Motivation and Our Contribution}
While many works study   ISAC and \bc systems separately  \cite{HoangBook2020, Diluka2022, Liu2022, Rahman2020, Rezaei2023Coding, Galappaththige2023SR, Zhitong2022}, study \cite{Diluka2023} is the first one to introduce a holistic exploration of their integrated functionalities and ensuing performance metrics of a limited ISABC system.  This work thus breaks new ground by exploiting their synergistic potential.

In \cite{Diluka2023}, the integration of sensing at an FD BS with a backscatter tag and user is detailed. The tag reflects the BS signal to transmit data to the user, while the BS extracts environmental data from the tag's signal. The study provides closed-form expressions for user and tag communication rates and BS sensing rates. However, \cite{Diluka2023} has not addressed multiple tags, BS transmit/received beamformers, and tag's reflection coefficients beyond the single-tag scenario. This study extends the research, exploring multi-tag energy harvesting (EH) scenarios and optimizing BS beamforming, received beamformers, and tag reflection coefficients.  Our contributions are summarized as follows:

\begin{enumerate}
    \item  The objective of this paper is to intertwine sensing functions with communication capabilities and to elucidate the advantages of  ISABC. To this end, we analyze a network of multi-tags, a user, and an FD BS. The BS manages communication for both the user and the tags. Specifically, the tags reflect the BS’s signal to communicate with the user, while the BS exploits the same reflected signal to derive environmental insights.

    \item  We optimize the system for minimal BS power consumption while meeting each node's quality-of-service (QoS) requirements. The optimization variables are the  BS transmit/received beamformers and tag power reflection coefficients. The nonlinear EH model at the tags further complicates this non-convex optimization problem.
    To tackle this, we use an AO (alternative optimization)  approach \cite{bezdek2003convergence}. We start by optimizing the BS received beamformer for the tags' signal using minimum mean-squared error (MMSE) filtering and the generalized Rayleigh quotient form of the signal-to-noise-to-interference ratio (SINR) \cite{Stanczak2008book, Wan2016}. Then, we compute the BS transmit beamformers using the semidefinite relaxation (SDR) method \cite{so2007approximating, Qingqing}.  Finally, we introduce a slack-optimization problem to optimize the tag reflection coefficients \cite{Shayan_Zargari, Razaviyayn2013}. 

    \item We provide convergence and complexity analysis and simulations to assess the efficiency of ISABC, comparing it to conventional ISAC, communication-only, and sensing-only schemes (with/without EH). With a configuration of ten antennas each for transmission and reception at the BS, ISABC offers a $75\%$ combined communication and sensing rate enhancement compared to the conventional \bc, while only necessitating a modest $3.4\%$ rise in transmit power.

\end{enumerate}

\textit{Notation}: 
Vectors and matrices are expressed by
boldface lower case letters $\mathbf{a}$ and capital letters $\mathbf{A}$, respectively. For a square matrix $\mathbf{A}$, $\mathbf{A}^{\rm{H}}$ and $\mathbf{A}^{\rm{T}}$ are Hermitian conjugate transpose and transpose of a matrix, respectively. $\mathbf{I}_M$ denotes the $M$-by-$M$ identity matrix. The Euclidean norm of a complex vector and the absolute value of a complex scalar are denoted by $\|\cdot\|$ and $|\cdot|$, respectively. The distribution of a circularly symmetric complex Gaussian (CSCG) random vector with mean $\boldsymbol{\mu}$ and covariance matrix $\mathbf{C}$ is denoted by $\sim \mathcal{C}\mathcal{N}(\boldsymbol{\mu},\,\mathbf{C})$. 
The expectation operator is denoted by $\mathbb{E}[\cdot]$. Besides, $\mathbb{C}^{M\times N}$ and ${\mathbb{R}^{M \times 1}}$ represent $M\times N$ dimensional complex matrices and $M\times 1$ dimensional real vectors, respectively. Further, $\mathcal{O}$ expresses the big-O notation. Finally, $\mathcal{K} \triangleq \{1,\ldots,K\}$ and $\mathcal{K}_k \triangleq \mathcal{K}\setminus\{k\}$.

\begin{figure}[!t]
    \centering 
    \def\svgwidth{220pt} 
    \fontsize{9}{9}\selectfont 
    \graphicspath{{}}
    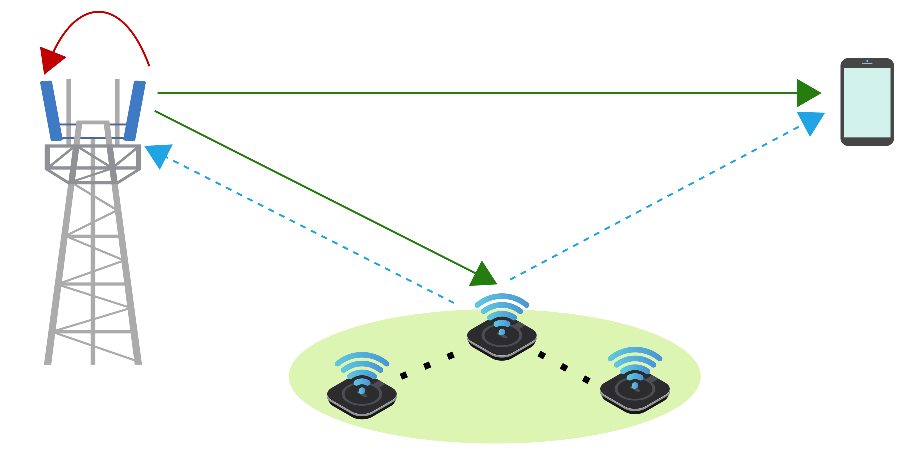  
    \caption{An ISABC system setup. \vspace{-0mm}}  \label{fig_SystemModelMultiTag}
\end{figure}

\section{System, Channel, and Signal Models}\label{Sec_system_modelA}
Here, we describe the system, channel, and transmission models in detail.

\subsection{System Model}

As shown in Fig.~\ref{fig_SystemModelMultiTag}, we consider an ISABC network having an FD BS consisting of $M\geq 1$ transmit and $N\geq 1$ receiver uniform linear array (ULA) antennas, $K$ single-antenna backscatter tags/sensors, denoted by $T_k$, $\forall k \in \mathcal{K} \triangleq \{1,\ldots, K\}$, and a single-antenna user (or mobile reader). The BS antennas are spaced at half-wavelengths \cite{Zhenyao2023}. Tags perform EH and backscatter data to the reader (Section~\ref{sec_data_EH_tag}).
 
The FD BS uses transmit beamforming for communication and environment sensing. Tags utilize harvest energy from the BS signal and also reflect it for data transfer. A cooperative user decodes its data and then uses SIC for tag data. The BS also captures the tag-reflected signals to extract environmental insights \cite{Diluka2023, Liu2022, Rahman2020}. It has separate antennas for transmission and reception to limit SI, assuming perfect cancellation and synchronized timing \cite{Liu2022, Rahman2020}.

\subsection{Channel Model}
We consider block flat-fading channel models for the system. During each fading block, the channels between the BS and the user, the BS and $T_k$, and $T_k$ and the user are denoted by $\mathbf{f} \in \mathbb{C}^{M\times 1}$, $\mathbf{g}_{f,k} \in \mathbb{C}^{M\times 1}$, and $v_k \in \mathbb{C}$, respectively. Moreover, $\mathbf{g}_{b,k} \in \mathbb{C}^{N\times 1}$ represents the channel between $T_k$ and the BS receiver antennas. Among these channels, pure communication channels, i.e., $\mathbf{f}$ and $v_k$, are modeled as Rayleigh fading and given by 
\begin{eqnarray}\label{eqn_channel_model}
    \mathbf{a} = \zeta_a^{1/2} \tilde{\mathbf{a}},
\end{eqnarray}
where $\mathbf{a} \in \{\mathbf{f}, v_{k}\}$. In \eqref{eqn_channel_model}, $\zeta_a$ captures the large-scale path-loss and shadowing,  which stays constant for several coherence intervals. Moreover, $\tilde{\mathbf{a}} \sim \mathcal{CN}(\mathbf{0}, \mathbf{I}_{A})$ accounts for the small-scale Rayleigh fading\footnote{Note that $v_{k} = \zeta_{v_{k}}^{1/2} \tilde{v}_{k}$ and $\tilde{v}_{k} \sim  \mathcal{CN}\left(0,1 \right)$.}, where $A\in \{M,1\}$.

On the other hand, following the echo signal multiple-input and multiple-output (MIMO) radar, the channels between the BS and tags are modeled as line-of-sight (LoS) paths \cite{Zhenyao2023}. We denote the transmit/receiver array steering vectors to the direction $\theta_k$ by
\begin{eqnarray}\label{array_resp}
    \mathbf{b}(\theta_k)= \sqrt{\frac{\zeta_b}{B}} \left[1, e^{j\pi \sin(\theta_k)}, \ldots, e^{j\pi (B-1) \sin(\theta_k)} \right]^{\rm{T}},
\end{eqnarray}
where $\mathbf{b} \in \{\mathbf{g}_{f,k}, \mathbf{g}_{b,k}\}$, $B\in\{M, N\}$, $\theta_k$ is the direction of $T_k$ with respect to the BS and user/reader direction, and $\zeta_b$ is the path-loss. Finally, the SI channel between the transmitter and the receiver ULAs of the  BS is denoted as $\mathbf{H}_{\rm{SI}} \in  \mathbb{C}^{M\times N}$.

{In typical {\bc} applications, tags are often deployed in stable environments such as rooms or warehouse shelves, maintaining a clear LoS to the BS. This setup supports applications like inventory tracking and smart shelving. Conversely, communication links, including BS-user and tags-user connections, encounter various propagation challenges due to mobility, obstacles, and varying distances. Hence, the Rayleigh fading model is commonly employed due to its stochastic nature, effectively capturing the diverse signal paths encountered in urban landscapes \mbox{\cite{Diluka2023, Zhenyao2023, Shuo2018, rezaei2023timespread, Zargari10353962}}. However, it is worth noting that the proposed optimization framework and its solution are adaptable to any fading model. Additionally, performance trends remain relatively consistent across different fading models (Fig.~{\ref{Rician_fig}}).}

\begin{rem}
{We assume that channel estimation and data transmission tasks occur in two separate time slots. In the initial slot, channel state information (CSI) can be estimated using emerging techniques \mbox{\cite{rezaei2023timespread, Abdallah2023, Shuo2018}}. These methods encompass pilot-based, blind, and semi-blind approaches, employing algorithms like least squares (LS), MMSE estimator, expectation maximization (EM), and eigenvalue decomposition (EVD) to achieve high-precision channel estimation \mbox{\cite{rezaei2023timespread, Abdallah2023, Shuo2018}}. However, as the focus of our study lies in the integration of sensing into \bc, we assume the presence of perfect CSI, implying knowledge of $\qf$ and ${\qh_k}_{k\in\mathcal{K}}$. This assumption aligns with standard practice in most studies \mbox{\cite{Azar_Hakimi}}.}  
\end{rem}

\subsection{Tag Characteristics} \label{sec_data_EH_tag}
Each tag employs load modulation, dependent on the complex reflection coefficient of $T_k$, $\forall k \in \mathcal{K}$ \cite{Diluka2022, Rezaei2023Coding}: 
\begin{equation}\label{tag:ref}
{\Theta_{i,k} = \frac{Z_i - Z_{a,k}^\star}{Z_i + Z_{a,k}}.}
\end{equation}
In \eqref{tag:ref}, $Z_{a,k}$ symbolizes the antenna impedance of $T_k$, and $Z_i$ represents the $i$-th load impedance where {$i \in \{1,2,\ldots, Q\}$}.  We can express  {$\Theta_{i,k} = |\Theta_{k}| e^{j \varphi_{i}}$, where $\varphi_i \in [0, 2\pi]$}. Since all the tags utilize a uniform set of phases {$\{\varphi_1, \varphi_2, \ldots, \varphi_Q\}$} to transmit their data, {$Q$-ary phase-shift keying (PSK)} constellation is realized. This means the phase {$\angle \Theta_{i,k}$} is contingent not on a specific tag but rather on {$ \varphi_i$}.

Furthermore, the index $i$ is omitted in the power reflection coefficient {$|\Theta_{i,k}|^2$} as it is feasible to design a collection of load impedance values that maintain a consistent {$|\Theta_{i,k}|^2$}, only modulating the phase of {$\Theta_{i,k}$} \cite{Azar_Hakimi}. Thus, we denote  {$\alpha_k = |\Theta_{i,k}|^2$}, which is the power reflection coefficient of $T_k$.  The tag design will satisfy $0 < \alpha_k < 1$ \cite{Rezaei2023Coding}.

\subsection{Transmission Model}
The BS transmitted signal $\mathbf{x} \in \mathbb{C}^{M\times 1}$, which includes both data and sensing waveforms, is given by $\mathbf{x} = \mathbf{w}x_d + \mathbf{s}$, where $x_d \in \mathbb{C}$ is the intended data symbol for the mobile user/reader with unit power, i.e., $ \mathbb{E}\{\vert x_d \vert^2 \}=1$, $\mathbf{w} \in \mathbb{C}^{M\times 1}$ is the BS beamforming vector, and $\mathbf{s}\in \mathbb{C}^{M\times 1}$ is the sensing signal with the covariance matrix $\mathbf{S} \triangleq \mathbb{E}\{\mathbf{s} \mathbf{s}^{\rm{H}} \}$ for extending the degrees-of-freedom of $\mathbf{x}$ to achieve enhanced sensing performance \cite{Zhenyao2023, Haocheng2023}. 
Also, it is assumed that $x_d$ and $\mathbf{s}$ are independent of each other, and the beamforming at the BS is achieved through designing $\mathbf{w}$ and $\mathbf{S}$ \cite{Zhenyao2023, Haocheng2023}. The designed $\mathbf{S}$ can be used to generate the dedicated sensing signal, $\mathbf{s}$ \cite{Stoica2007}.

The user receives the BS signal as well as the tags' backscattered signals. The propagation delay differences for all signals are assumed to be negligible \cite{Liao2020}. The user-received signal is thus given by
\begin{eqnarray}\label{eqn_rx_user}
    y = \mathbf{f}^{\rm{H}} \mathbf{x} + \sum_{k\in \mathcal{K}} \sqrt{\alpha_k} \mathbf{h}_k^{\rm{H}} \mathbf{x} c_k + z_u,
\end{eqnarray}
where the first and the second terms in \eqref{eqn_rx_user} represent the direct-link, i.e., BS-to-user, and backscatter-link, i.e., BS-to-tags-to-user, signals, respectively. Moreover, $z_u \sim \mathcal{CN}(0,\sigma^2)$ is the white Gaussian noise (AWGN) at the user with $0$ mean and $\sigma^2$ variance, $\mathbf{h}_k$ is the effective backscatter channel through $T_k$, i.e., $\mathbf{h}_k= \mathbf{g}_{f,k}(\theta_k) v_k$, and $c_k$ is $T_k$'s data with $\mathbb{E}\{\vert c_k \vert^2 \}=1$. As the mobile user/reader and BS are integral to the primary networks, a pre-existing connection facilitates the exchange of information, including sensing waveform, via a control link \cite{positioningLTE}. It is thus assumed that the user knows the sensing waveform in advance and removes it before decoding data. Following the removal of the sensing signal, the received signal can be expressed as
\begin{eqnarray}\label{eqn_rx_noSens}
    y = \mathbf{f}^{\rm{H}} \mathbf{w} x_d  + \sum_{k\in \mathcal{K}} \sqrt{\alpha_k} \mathbf{h}_k^{\rm{H}} \left(\mathbf{w}x_d +\mathbf{s} \right) c_k + z_u.
\end{eqnarray}
Next, the user performs SIC to recover the backscattered data from the tags. In particular, the user decodes its own signal, treating tag signals as interference, and then subtracts the decoded $x_d$ from the received signal \eqref{eqn_rx_noSens} for decoding the tags' data. The post-processed signal for decoding tags' data is thus given as
\begin{eqnarray}\label{eqn_rx_tag}
    y_t =  \sum_{k\in \mathcal{K}} \sqrt{\alpha_k} \mathbf{h}_k^{\rm{H}} \left(\mathbf{w}x_d +\mathbf{s} \right) c_k + z_u.
\end{eqnarray}
Backscattered signals from tags, on the other hand, reach not only the user but also the BS. The BS thus aims to extract environmental information from these unintentionally received backscattered signals \cite{Diluka2023}. The received signal at the BS, i.e., $\mathbf{y}_b \in \mathbb{C}^{N\times 1}$, is given as
\begin{eqnarray}\label{eqn_rx_BS}
    \mathbf{y}_b = \sum_{k\in \mathcal{K}} \sqrt{\alpha_k} \mathbf{G}_k (\theta_k) \mathbf{x} c_k + \mathbf{H}_{\rm{SI}}^{\rm{H}} \mathbf{x}  + \mathbf{z}_b,
\end{eqnarray}
where $\mathbf{G}_k (\theta_k) \triangleq \mathbf{g}_{b,k}(\theta_k) \mathbf{g}_{f,k}^{\rm{H}}(\theta_k)$ and  $\sqrt{\alpha_k} \mathbf{G}_k (\theta_k) \mathbf{x} c_k$ is the $k$-th backscatter tag ($T_k$) reflection. The second term in \eqref{eqn_rx_BS} denotes the SI at the receiver of the BS due to simultaneous transmission and reception, and $\mathbf{z}_b \sim \mathcal{CN}(\mathbf{0},\sigma^2 \mathbf{I}_N)$ the AWGN at the BS. We assume that the FD BS cancels the SI at its receiver using perfect SI cancellation techniques \cite{Liu2022, Rahman2020}. The post-processed SI cancelled signal is thus given as
\begin{eqnarray}\label{eqn_rx_BS_SI}
    \mathbf{y}_b' = \sum_{k\in \mathcal{K}} \sqrt{\alpha_k} \mathbf{G}_k (\theta_k) \mathbf{x} c_k + \mathbf{z}_b.
\end{eqnarray}
The BS then applies the receiver beamformer, $\mathbf{u}_k \in \mathbb{C}^{N\times 1}$ for $k \in \mathcal{K}$,  to the received signal \eqref{eqn_rx_BS_SI} to capture the desired reflected signal of $T_k$. The post-processed signal for obtaining $T_k$'s sensing information is given as
\begin{eqnarray}\label{t_kSens}
    {y}_{b,k} \!&=&\! \mathbf{u}_k^{\rm{H}} \mathbf{y}_b'  \\
    \!&=&\!\! \sqrt{\alpha_k} \mathbf{u}_k^{\rm{H}} \mathbf{G}_k (\theta_k) \mathbf{x} c_k \!+\! \sum_{i\in \mathcal{K}_k} \!\!\sqrt{\alpha_i} \mathbf{u}_k^{\rm{H}} \mathbf{G}_i (\theta_i) \mathbf{x} c_i \!+\! \mathbf{u}_k^{\rm{H}} \mathbf{z}_b, \nonumber 
\end{eqnarray}
where $\mathcal{K}_k \triangleq \mathcal{K}\setminus\{k\}$.

\begin{rem}
Before proceeding, we wish to clarify the following assumptions used in the considered system: (i) When the transmit and receive arrays are colocated, the angles of a backscatter tag/target seen at the BS transceiver are the same in \eqref{eqn_rx_BS}, which is a reasonable and common assumption \cite{Jian2007}, (ii) Because the BS and the user are linked via a controlled link, the user is aware of the sensing waveform in advance \cite{positioningLTE}, and (iii) We assume that a dedicated channel estimating phase is used prior to FD transmission, ensuring CSI is available for beamforming design and SI at the BS, as well as SIC at the user/reader \cite{Temiz2022}.
\end{rem}

{Existing channel estimating methods accurately estimate the cascaded channel $\qh_k$, but do not separate or estimate the individual channels, i.e., $\qg_{f,k}$ and $v_k$. Although $\qh_k$ contains $\theta_k$, accurately estimating $\theta_k$ from $\qh_k$ is not feasible as it contains an unknown channel $v_k$. Consequently, as we proposed in this study, estimating the tags' sensing parameters necessitates a separate sensing framework. Conversely, the properties of $T_k$'s reflected signal $\sqrt{\alpha_k} \mathbf{G}_k (\theta_k) \mathbf{x} c_k$, such as round trip delay time and angle of arrival, can be used to acquire tag environmental information, such as range, velocity, and angle. Maximizing the echo signal strength for a particular tag while limiting tag interference at the BS, i.e., sensing SINR, improves tag detection probability and precise estimate of these targeted parameters. However, as we primarily focus on integrating sensing into {\bc} systems and its performance optimization (transmit power), we leave the sensing parameter estimation for future research.}

\section{Communication and Sensing Performance}
The  SINRs for sensing and communication tasks substantially impact the performance of both systems. Herein, we derive those SINRs of the tags and the user to evaluate and optimize the ISABC system. 

\subsection{Communication Performance}
The main communication SINRs are the user SINR and tags' SINRs.
\subsubsection{User SINR}
The user first decodes its data, considering the tags' signal as interference. From \eqref{eqn_rx_noSens}, the received SINR is obtained as
\begin{eqnarray}\label{eqn_user_sinr}
    \Gamma_{u} = \frac{\vert \mathbf{f}^{\rm{H}} \mathbf{w}\vert^2}{\sum_{k\in \mathcal{K}} \alpha_k (\vert \mathbf{h}_k^{\rm{H}} \mathbf{w} \vert^2 +  \mathbf{h}_k^{\rm{H}} \mathbf{S}\mathbf{h}_k) + \sigma^2}.
\end{eqnarray}
Here, we consider that while the user can cancel interference from the direct-link sensing signal, i.e., $\mathbf{f}^{\rm{H}} \mathbf{s}$, it cannot cancel interference from tag-sensing reflections because they consist of unknown tag data, i.e., $c_k$ for $k\in \mathcal{K}$.

\subsubsection{$T_k$'s SINR}
{The user employs the SIC for decoding the backscatter data. Using {\eqref{eqn_rx_tag}}, the SINR of $T_k$ at the user is given as}
\begin{eqnarray}\label{eqn_tag_sinr}
    \Gamma_{t,k} = \frac{\alpha_k \left( \vert \mathbf{h}_k^{\rm{H}} \mathbf{w}  \vert^2  + \mathbf{h}_k^{\rm{H}} \mathbf{S} \mathbf{h}_k\right)  }{ \sum_{i \in \mathcal{K}_k} \alpha_i \left( \vert \mathbf{h}_i^{\rm{H}} \mathbf{w}  \vert^2  + \mathbf{h}_i^{\rm{H}} \mathbf{S} \mathbf{h}_i\right) + \sigma^2}.
\end{eqnarray}

\subsection{Sensing Performance}
The BS uses the unintentionally received backscattered signal for sensing, i.e., to learn and obtain environmental information. The BS applies a receiver beamformer, $\mathbf{u}_k$ for $k \in \mathcal{K}$,  to the received signal \eqref{eqn_rx_BS_SI} to capture the desired reflected signal of $T_k$. To this end, the sensing SINR of $T_k$ is obtained using  \eqref{t_kSens} which is given by
\begin{eqnarray}\label{eqn_tag_sen_SINR}
    \Upsilon_{k} &=& \frac{\alpha_k \mathbb{E} \left\{ \vert \mathbf{u}_k^{\rm{H}} \mathbf{G}_k (\theta_k) \mathbf{x} \vert^2 \right\}  }{ \sum\limits_{i\in \mathcal{K}_k} \alpha_i \mathbb{E} \left\{ \vert \mathbf{u}_k^{\rm{H}} \mathbf{G}_i(\theta_i) \mathbf{x} \vert^2 \right\} + \mathbb{E} \left\{ \vert \mathbf{u}_k^{\rm{H}} \mathbf{z}_b \vert^2 \right\}   } \nonumber \\
    &=& \frac{\alpha_k \mathbf{u}_k^{\rm{H}} \mathbf{G}_k (\theta_k) \mathbf{R}_x  \mathbf{G}_k^{\rm{H}} (\theta_k) \mathbf{u}_k  }{ \mathbf{u}_k^{\rm{H}} \left( \sum\limits_{i\in \mathcal{K}_k} \alpha_i  \mathbf{G}_i(\theta_i) \mathbf{R}_x \mathbf{G}_i^{\rm{H}} + \sigma^2 \mathbf{I}_N  \right)\mathbf{u}_k  },  \qquad
\end{eqnarray}
where $\mathbf{R}_x \triangleq \mathbb{E} \{\mathbf{x} \mathbf{x}^{\rm{H}} \} = \mathbf{w}\mathbf{w}^{\rm{H}} + \mathbf{S}$ is the covariance matrix of the BS transmitted signal \cite{Zhenyao2023}.

\begin{rem}
{Communication and sensing performance are essentially determined by the associated SINRs. In particular, communication symbol detection probability increases monotonically with SINR \mbox{\cite{Goldsmith_2005, Tse_Viswanath_2005}}. Maximizing SINR eventually minimizes the symbol error probability. Therefore, we use the communication SINR performance as a standard metric. Similarly, in sensing, the detection probability of a target (tag) is proportional to its sensing SINR \mbox{\cite{Zhenyao2023, Cui2014}}. The sensing SINR enables target detection using both transmit and receiver beamforming (see Fig.~{\ref{beam_pro}} and Fig.~{\ref{beam_sensing}}). It also aids in reducing interference between targets. However, the standard mean squared error of the transmit beampattern does not account for the receiver beampattern or target interference. Given the benefits of sensing SINR, we employ it as a viable metric for sensing performance.}
\end{rem}

\subsection{Tag's EH Model}
As mentioned before, the tags are passive and do not generate RF signals. Thus, they do not require batteries and rely entirely on EH to power their essential functions. They transmit their data by simply reflecting (i.e., backscattering) an external RF signal, which requires negligible power consumption.  The tags concurrently perform both EH and data communication operations via power-splitting of the incident RF signal \cite{Zhang2013, Azar_Hakimi}.

The power-splitting operation can be described as follows: let the incident RF  power at tag  $T_k$  be $p_{k}^{\rm{in}}=|\qg^{\rm{H}}_{f,k}\qw|^2+\qg^{\rm{H}}_{f,k}\qS\qg_{f,k}$. The tag reflects a fraction of $p_{k}^{\rm{in}}$ and harvests the remainder \cite{Zhang2013}. These amounts can be quantified as follows. 
\begin{enumerate}
    \item The reflected power is  $\alpha_k p_{k}^{\rm{in}}$, which is used  for data transmission, 
    \item The harvested power,  $p_{k}^{\rm{h}}$, can be modeled as a linear or nonlinear function of $p_{k}^{\rm{in}}$.  The linear model estimates the harvested power at each tag as   $p_{k}^{\rm{h}} = \eta(1-\alpha_k) p_{k}^{\rm{in}}$, where $\eta \in (0,1]$ is the power conversion efficiency. Although the linear model is the most widely used in the literature due to its simplicity, it ignores the nonlinear characteristics of actual EH circuits such as saturation and sensitivity \cite{Wang2020WPCN}. 
\end{enumerate}
Consequently, a parametric nonlinear sigmoid EH has been widely used  \cite{Boshkovska2015}. It models the total harvested power at $T_k$ as $p_{k}^{\rm{h}} = \Phi((1-\alpha_k)p_{k}^{\rm{in}})$, where
\begin{align}
\Phi(p_{k}^{\rm{in}}) &= \psi^{\text{NL}}_k - \frac{M_{\rm{NL}}\Omega_{\rm{NL}}}{1 - \Omega_{\rm{NL}}}, \quad \Omega_{\rm{NL}} = \frac{1}{1 + \exp(a_{\rm{NL}}b_{\rm{NL}})}, \label{eq:8a} \\
\psi^{\text{NL}}_k &= \frac{M_{\rm{NL}}}{1 + \exp(-a_{\rm{NL}}(p_{k}^{\rm{in}} - b_{\rm{NL}}))}, \quad \forall k, \label{eq:8b}
\end{align}
where $\psi^{\text{NL}}_k $ is a standard logistic function with constant $ \Omega_{\rm{NL}} $ ensuring a zero input/output response. Parameters $ a_{\rm{NL}} $ and $ b_{\rm{NL}} $ represent circuit characteristics like capacitance and resistance. $ M_{\rm{NL}} $ is the maximum harvested power when the EH circuit is saturated. Parameters $a_{\rm{NL}}$, $b_{\rm{NL}}$, and $M_{\rm{NL}}$ can be derived using a curve fitting tool \cite{Boshkovska2015}.  Other non-linear models may be found in \cite{Wang2020WPCN}. However, we must add that our problem formulation can handle linear and non-linear models within one unified framework.

Regardless of the choice of a linear or non-linear model, another critical parameter is the activation threshold, i.e., $p_b$. It is the minimal power required to wake up the EH circuit, which is typically \qty{-20}{\dB m} for commercial passive tags \cite{Diluka2022}. Thus, to activate the tag, the harvested power should exceed the threshold, i.e., $p_{k}^{\rm{h}} \ge p_b$. 
In particular, $(1-\alpha_k) p_{k}^{\rm{in}} \geq  p_b'$, where $p_b' \triangleq \Phi^{-1}(p_b)$ and $ \Phi^{-1}(p_b)= b_{\rm{NL}} - \frac{1}{a_{\rm{NL}}} \ln \left( \frac{M_{\rm{NL}} - p_b}{p_b} \right), \forall k$.
Without loss of generality, the nonlinear EH model is adopted for formulating the optimization problem and resource allocation algorithm design in the following.

\section{Problem Formulation}
Our objective is to optimize the BS received beamformers,  $  \{\qu_k\}_{k\in \mathcal{K}}$, alongside the transmit beamforming  $\qw$ and $ \qS$, and the tag  reflection coefficients, $\{\alpha_k\}_{k\in \mathcal{K}}$. We  denote the set of these optimization variables as $\mathcal{A} = \left\{ \{\qu_k\}_{k\in \mathcal{K}},  \{\alpha_k\}_{k\in \mathcal{K}}, \qw, \qS \succ 0 \right\}$. We focus on minimizing the total BS transmit power. This objective promotes large-scale connectivity, utilizing the saved power to enhance network capacity for additional tags and users. Such an approach is particularly beneficial for green IoT networks, aiming to reduce energy consumption, thus extending network lifespans, reducing costs, and enhancing resource efficiency \cite{Toro2022}.

This goal is achieved by ensuring that the communication SINR requirements for both tags and the user are met at the user, as well as the EH requirements of the tags and the sensing SINR requirements at the BS. The problem is thus formulated as follows:
\begin{subequations}
\begin{align}\label{P1}
\text{(P1)}:~& \min_{\mathcal{A}} \quad   \|\qw\|^2 + \tr(\qS),  \\
\text{s.t} \quad & \Upsilon_{k}  \geq \Upsilon_{k}^{\thr},~ \forall k, \label{P1:sinr_tag}  \\
& \Gamma_u \geq \Gamma_u^{\thr}, \label{P1:sinr_user} \\
& \Gamma_{t,k}  \geq  \Gamma_k^{\thr},~ \forall k, \label{P1:sinr_tag_user} \\
& p_{k}^{\rm{in}}  \geq  \frac{\Phi^{-1}(p_b)}{1-\alpha_k},~ \forall k,\label{P1:EH} \\
&\| \qu_k \|^2=1,~ \forall k, \label{P1:uk} \\
&0 < \alpha_k < 1,~ \forall k, \label{P1:alpha}
\end{align}
\end{subequations}
where \eqref{P1:sinr_tag}  guarantees the sensing SINR requirement of each tag in which $\Upsilon_{k}^{\thr}$ denotes the targeted sensing SINR of $T_k$ at the BS. 

{On the other hand, {\eqref{P1:sinr_user}} and {\eqref{P1:sinr_tag_user}} set the targeted SINR values, i.e., $\Gamma_u^{\thr}$ and $\Gamma_k^{\thr}$, for the user to decode its own data and tag data, respectively. These ensure the minimum quality of the rate for the user and tags.} 
Constraint \eqref{P1:EH}  indicates the minimum incident power required at each tag for activation.  Constraint \eqref{P1:alpha} specifies the natural bounds on the reflection coefficient of each tag.

\begin{rem}The justification for objective (15a) is as follows. Minimizing BS transmit power is crucial for energy conservation, cost reduction, and prolonged network lifespan. While other metrics like latency, throughput, and reliability are relevant, integrating them may introduce conflicting goals. Our tailored problem formulation prioritizes SINR requirements while minimizing BS transmit power, aligning with the efficiency and robustness essential for green IoT deployments.
\end{rem}

{Note that the transmit power optimization approaches of ISABC and ISAC differ and present unique challenges. Unlike ISAC with conventional targets, ISABC utilizes backscatter tags for sensing and as a data transmission medium. Hence, it adds complexity and additional constraints to its optimization problem, i.e., \mbox{\text{(P1)}}. In particular, ISABC has additional constraints \mbox{{\eqref{P1:sinr_tag_user}}} and \mbox{\eqref{P1:EH}} for tag data transmission and EH in comparison to conventional ISAC. In contrast, by omitting these constraints, ISAC offers a significantly simplified optimization framework. In addition, ISAC has a lower spectral efficiency than ISABC due to the absence of tag data transmission.}

Since the denominators of SINRs in \eqref{eqn_user_sinr} and \eqref{eqn_tag_sinr} are similar (except for the $k$-th term), constraints \eqref{P1:sinr_user} and \eqref{P1:sinr_tag_user} can be combined into a single constraint without changing the original problem (P1). Combining these constraints utilizing their similar structures thus yields the following equivalent  optimization problem:
\begin{subequations}
\begin{align}\label{P2}
\text{(P2)}:~& \min_{\mathcal{A}} \quad   \|\qw\|^2 + \tr(\qS),  \\
\text{s.t}  \quad & \frac{|\mathbf{f}^{\rm{H}} \mathbf{w} |^2}{\Gamma_u^{\thr}(1+\Gamma_k^{\thr})}-  \sum_{i \in \mathcal{K}_k} \alpha_i \left(|\qh_i^{\rm{H}} \qw|^2+\qh^{\rm{H}}_i \qS \qh_i \right)  \geq \sigma^2, \forall k  , \label{P2:sinr_user} \\
& \eqref{P1:sinr_tag},~ \eqref{P1:EH}-\eqref{P1:alpha},
\end{align}
\end{subequations}
where constraint \eqref{P1:sinr_user} and \eqref{P1:sinr_tag_user} are combine into constraint \eqref{P2:sinr_user}. Our next step is to develop a new optimization algorithm to solve \eqref{P2}.

\section{Proposed Solution}\label{pro_solu}
Problem \eqref{P2} is non-convex due to its constraints set, which involves products of the optimization variables. To tackle this, we turn to AO \cite{bezdek2003convergence}. It divides an optimization problem into sub-problems that are easier to solve individually, which are then solved alternatively, one at a time, while keeping the other variables fixed. The process continues iteratively until convergence or a stopping criterion is met. This approach works when a direct or simultaneous optimization of all variables is challenging or computationally expensive \cite{bezdek2003convergence}. Thus, to solve $\min_{x} f(x)$, where $x \in \mathbb{R}^{s}$ can be divided into $l > 1$ blocks, i.e., $x = (x_1, x_2, \ldots, x_l)^{\rm{T}}$ with $x_l \in \mathbb{R}^{s_k}$ and $\sum_{k=1}^l s_k = s$, the strategy is to cyclically minimize for one block at a time, holding the others constant, until convergence is achieved. The AO technique offers a solution that is locally optimal \cite{bezdek2003convergence} (See Remark \ref{rem_3}).

Thus, we divide \eqref{P2} into three sub-problems. For each one, we optimize \eqref{P2} for the associated variable(s) while keeping the other optimization variables fixed.
The result then feeds into the next sub-problem. This block optimization iterates until the objective function converges. In the first sub-problem, with constant transmit beamformers and reflection coefficients, we optimize received beamformers using \eqref{P1:sinr_tag}. Next, we fix the received beamformers and reflection coefficients to optimize transmit beamformers $\qw$ and $\qS$, navigating the non-convex constraints in \eqref{P2} using the semidefinite relaxation (SDR) method \cite{so2007approximating, Qingqing}. We handle the last sub-problem focused on optimizing reflection coefficients with a novel slack-optimization approach.

Although the original AO approach suggests that the same objective function be optimized over alternative blocks of variables \cite{bezdek2003convergence}, that is not the case here. In our case, the first and third sub-problems are independent of the original objective. These two hence are feasibility problems, where the primary goal is to find a feasible solution that satisfies a set of constraints. Feasibility problems focus solely on finding a point that meets the specified constraints, without necessarily optimizing any objective. Nonetheless, we transform these into optimization problems with explicit objectives to achieve more efficient solutions without compromising the original problem. Consequently, our approach may yield a considerably efficient solution \cite{Qingqing}.

\subsection{Sub-Problem 1: Optimization Over $\qu_k$ }
For given  $\{\qw, \qS,  \{\alpha_k\}_{k\in \mathcal{K}}\}$, problem (P2) becomes a feasibility problem for  receiver beamforming, $\qu_k$. This is because the goal of \eqref{P2}, i.e., the BS transmit power minimization, is independent of $\qu_k$. Any feasible value of $\qu_k$ that satisfies the constraints \eqref{P1:sinr_tag} and \eqref{P1:uk} can thus be a solution.

Although $\qu_k$ might not directly impact for reducing BS transmit power, the sensing  SINR at the BS for each tag depends on the appropriate choice of $\qu_k$. Therefore, we take an approach that seeks to maximize each tag's sensing SINR. This tactic serves a dual purpose: it guarantees that the sensing performance criteria are fulfilled and indirectly supports the overarching objective of transmit power reduction. This is because ensuring high SINR for tag signals can potentially alleviate the need for higher transmit power to overcome poor reception, thus aligning with our power minimization strategy. By optimizing $\qu_k$ to maximize the sensing SINR, we improve power minimization in the subsequent AO  steps \cite{Stanczak2008book, Wan2016}.

Utilizing the unique structure of the sensing SINR for each tag \eqref{eqn_tag_sen_SINR},  we transform this sub-problem into a generalized Rayleigh quotient optimization problem, which has a direct closed-form solution \cite{Stanczak2008book, Wan2016}. Consequently, we obtain the following optimization problem:
\begin{subequations}
\begin{align}\label{P3}
\text{(P3)}:~&   \max_{\qu_k}  \quad   \frac{\alpha_k \qu_k^{\rm{H}} \qG_k \qR_x \qG^{\rm{H}}_k \qu_k}{\qu^{\rm{H}}_k\left(\sum_{i\in \mathcal{K}_k} \alpha_i \qG_i \qR_x \qG_i^{\rm{H}} +  \sigma^2_k \qI_N \right)\qu_k}, \\
\text{s.t} \quad & \|\qu_k\|^2=1,~ \forall k.
\end{align}
\end{subequations}
The  objective function in \eqref{P3} can be restated as  the  following optimization problem: 
\begin{subequations}
\begin{align}\label{P4}
\text{(P4)}:~&  \max_{\qu_k}   \frac{\qu_k^{\rm{H}} \tilde{\qG}_k \tilde{\qG}^{\rm{H}}_k \qu_k}{\qu_k^{\rm{H}} \qQ \qu_k},\quad
\text{s.t} \quad \|\qu_k\|^2=1,~ \forall k,
\end{align}
\end{subequations}
where $\tilde{\qG}_k=\sqrt{\alpha_k} \qG_k (\qw+\qs)$ and $\qQ =  \sum_{i\in \mathcal{K}_k} \alpha_i \qG_i \qR_x \qG_i^{\rm{H}} +  \sigma^2_k \qI_N$. Problem (P4) in \eqref{P4} is  a generalized Rayleigh ratio quotient problem \cite{Stanczak2008book, Wan2016}. When the  transmit beamformers and  reflection coefficients are fixed, the  optimal received beamformer is thus given by
\begin{align}\label{opt:uk}
\qu_k^* = \frac{\qQ^{-1} \tilde{\qG}_k}{\|\qQ^{-1} \tilde{\qG}_k\|},~ \forall k,
\end{align}
which is an MMSE filter \cite{Stanczak2008book, Wan2016}.

\setcounter{mycounter}{\value{equation}}
\begin{figure*}[!t]
\addtocounter{equation}{1}
\begin{eqnarray} \label{P6}
\text{(P6)}:&&\!\!\!\!\!\!\!\! \underset{\qW, \qS} {\text{minimize}}\:\:  \tr(\qW) + \tr(\qS)   \nonumber \\
\text{s.t}  &&\!\!\!\!\!\!\!\! \Upsilon_k^{\thr}{\qu^{\rm{H}}_k\left(\sum_{i\in \mathcal{K}_k} \alpha_i  \left( \tr(\qG_i^{\rm{H}}\qG_i\qW) + \tr(\qG_i^{\rm{H}}\qG_i\qS) \right)  +  \sigma^2_k \qI_N \right)\qu_k} - {\alpha_k \qu_k^{\rm{H}} \left( \tr(\qG_i^{\rm{H}}\qG_i\qW) + \tr(\qG_i^{\rm{H}}\qG_i\qS) \right) \qu_k }  \leq 0, \forall k    ,\nonumber\\ 
&&\!\!\!\!\!\!\!\!\frac{\tr(\qf\qf^{\rm{H}}\qW)}{\Gamma_u^{\thr}(1+\Gamma_k^{\thr})}-  {\sum_{i\in \mathcal{K}_k}\alpha_i \left(\tr(\qh_i\qh_i^{\rm{H}}\qW) + \tr(\qh_i\qh_i^{\rm{H}}\qS)\right)  }  \geq \sigma^2 , \forall k, \nonumber\\ 
&&\!\!\!\!\!\!\!\! P_{\rm{th}}-(1-\alpha_k)(\tr(\qg_{f,k}\qg^{\rm{H}}_{f,k}\qW) +\tr(\qg^{\rm{H}}_{f,k}\qg_{f,k}\qS))\leq 0, \forall k.
\end{eqnarray}	
\hrulefill
\end{figure*}
\setcounter{equation}{\value{mycounter}}

\subsection{Sub-Problem 2: Optimization Over $\qw$ and $\qS$ } 
For given $\{\{\qu_k\}_{k\in \mathcal{K}},  \{\alpha_k\}_{k\in \mathcal{K}}\}$, problem (P2) can be reformulated as the following equivalent problem:
\begin{subequations}
\begin{align}\label{P5}
\text{(P5)}:~& \min_{\qw, \qS} \quad   \|\qw\|^2 + \tr(\qS), \quad \\
&\text{s.t} \quad  \eqref{P2:sinr_user},~ \eqref{P1:sinr_tag},~\eqref{P1:EH}.
\end{align}
\end{subequations}
Utilizing the SDR method, we can adeptly address problem \eqref{P5} \cite{so2007approximating, Qingqing}. We introduce the matrix definition $ \mathbf{W} = \mathbf{w} \mathbf{w}^{\rm{H}}$. By exploiting that $\mathbf{W}$ is semidefinite and has   $\text{Rank}(\mathbf{W}) = 1$, problem \eqref{P5} can be recast as \eqref{P6} where the rank one constraint is dropped to relax the problem.

Note that the relaxation of the rank in \eqref{P6} represents a conventional semi-definite programming (SDP)  problem \cite{boyd2004convex}, which can be tackled using the CVX tool \cite{boyd2004convex, grant2014cvx}. 
Let the solution to this relaxed SDR problem be $\qW^*$  with the  eigenvalue decomposition: $\qW^* = \qU \boldsymbol{\Sigma} \qU^{\rm{H}}$ where $\qU$ is a unitary matrix and $\boldsymbol{\Sigma}  = \text{diag}(\lambda_1, \dots, \lambda_{M})$ is a diagonal matrix, both sized $M \times M$. If $\qW^*$ is rank one, the optimal transmit beamformer, $\qw^*$, is the eigenvector for the maximum eigenvalue. Otherwise, to account for the relaxed rank-one constraint, we utilize the Gaussian randomization \cite{Qingqing}. Specifically, we compute a  solution for \eqref{P5} as $\bar{\qW} = \qU \boldsymbol{\Sigma} ^{1/2} \qr$, with $\qr \in \mathcal{CN}(0, \qI_{M})$. We do this for $10^5$ times and select the best.
These numerous random realizations of  $\qr$ with the SDR technique ensure a $\frac{\pi}{4}$-approximation to the optimal value of \eqref{P5} \cite{so2007approximating, Qingqing}.

\subsection{Sub-Problem 3: Optimization Over $\alpha_k$ } 
This one focuses on optimizing each tag's reflection coefficient ($\alpha_k$). By isolating the variables and constraints relevant to this sub-problem, we transform the original optimization problem \eqref{P1}  into a feasibility problem as follows:
\addtocounter{equation}{1}
\begin{subequations}
\begin{align} \label{P7_1}
\text{(P7)}:~& {\rm{find}} \quad \alpha_k  \\
\text{s.t} \quad & \eqref{P1:sinr_tag}-\eqref{P1:EH},~\eqref{P1:alpha}, \label{eqn_P7_const}
\end{align}
\end{subequations}
where  any $\alpha_k$ that satisfies (P7)  is considered a feasible solution. However, the feasible solution yielded from (P7)  does not guarantee that the constraints are satisfied with equality \cite{Qingqing}. Hence, to achieve a better solution, we further transform this into an optimization problem with an explicit objective to obtain generally more efficient reflection coefficients to reduce the transmit power \cite{Qingqing}. The rationale is that for the transmit beamforming optimization problem, i.e., (P6) \eqref{P6}, all SINR and EH constraints are active at the optimal solution; thus, optimizing the reflection coefficient to force the tag SINR and EH to be greater than the targeted values in (P8) directly leads to a reduction in transmit power in (P6) \cite{Qingqing}. Following the slack variable optimization technique in \cite{Shayan_Zargari, Razaviyayn2013, Qingqing}, we can introduce two new slack variables, $t_1$ and $t_2$, which represent the “SINR residual” and “EH residual” to further optimize the SINR and EH margins while satisfying constraint \eqref{eqn_P7_const}.
We then propose solving the following sub-problem: 
\begin{subequations}
\begin{align} \label{P7}
\text{(P8)}:~& \min_{\alpha_k, t_1, t_2}  \quad  \lambda_1 t_1 + \lambda_2 t_2  \\
\text{s.t} \quad & {\alpha_k \qu_k^{\rm{H}} \qG_k \qR_x \qG^{\rm{H}}_k \qu_k }  \geq \\ &   \Upsilon_k^{\thr}{\qu^{\rm{H}}_k\left(\sum_{i\in \mathcal{K}_k} \alpha_i \qG_i \qR_x \qG_i^{\rm{H}} +  \sigma^2_k \qI_N \right)\qu_k}+ t_1, \forall k    ,\nonumber \label{P7_SINR}\\ 
&  \frac{|\mathbf{f}^{\rm{H}} \mathbf{w} |^2}{\Gamma_u^{\thr}(1+\Gamma_k^{\thr})}-  \sum_{i\in \mathcal{K}_k} \alpha_i \left(|\qh_i^{\rm{H}} \qw|^2+\qh^{\rm{H}}_i \qS \qh_i \right)  \geq \sigma^2 , \forall k    ,\\ 
& (1-\alpha_k)p_{k}^{\rm{in}}  \geq  {\Phi^{-1}(p_b)} + t_2, \forall k,   \\
&  \eqref{P1:alpha}, \label{P7_alpha}
\end{align}
\end{subequations}
where $\lambda_1$ and $\lambda_2$ are positive constants. Problem \eqref{P7} is convex and thus can be efficiently solved by solvers such as CVX  \cite{grant2014cvx}. Although  \eqref{P7_1} and \eqref{P7} share the same feasible set, the introduction of slack variables in \eqref{P7} converts strict constraints into adjustable ones with a definable margin. This facilitates the convergence process by setting a more tangible minimization goal and aligns well with the convergence strategies of iterative solvers like CVX due to the explicit objective guiding the solution path \cite{Shayan_Zargari,Qingqing}.

\begin{algorithm}[!t]
\caption{AO Algorithm}
\begin{algorithmic}[1]
\label{alg:AO:maxmin}
\STATE \textbf{Input}: Set the iteration counter $t = 0$, the convergence tolerance $\epsilon > 0$, initial feasible solution $\{\alpha_k\}_{k\in \mathcal{K}}, \qw, \qS$. Initialize the objective function value $F^{(0)} = 0$.  
\WHILE{ $ \frac{F^{(t+1)} - F^{(t)}}{F^{(t+1)}} \geq \epsilon$}
\STATE Solve \eqref{opt:uk} for the  received beamformer, $\qu_k^{(t+1)}$.
\STATE Solve \eqref{P5} to obtain transmit beamformers, $\{\qw^{(t+1)},\; \qs^{(t+1)}\}$ by recovering a rank-one solution via  Gaussian randomization 
\STATE Solve \eqref{P7} for the  reflection coefficients, $\alpha_k^{(t+1)}$. 
\STATE Calculate the objective function value $F^{(t+1)}$.
\STATE Set $t\leftarrow t+1$;
\ENDWHILE
\STATE \textbf{Output}: Optimal solutions $\mathcal{A}^*$.
\end{algorithmic}
\end{algorithm}

Our algorithm to solve \eqref{P1} is presented in Algorithm \ref{alg:AO:maxmin}. It starts by initializing $\{\{\alpha_k\}_{k\in \mathcal{K}}, \qw, \qS\}$ to random  feasible values
and, in every iteration, refines the values of received/transmit beamformers and reflection coefficients until the normalized improvement of the total transmit power is smaller than $\epsilon=\num{1e-3}$.

\begin{rem}\label{rem_3}
Each sub-problem within the AO algorithm is designed to have a local solution. The convergence of the AO algorithm to a local minimum or stationary point is generally guaranteed for a wide range of problems, provided that certain conditions are met \cite{bezdek2003convergence}. These conditions may include the objective function being lower-bounded and having certain smoothness properties. Specifically, as long as the individual sub-problems converge, the overall optimization also converges \cite{bezdek2003convergence}. By exploiting that insight, we use the SDR and slack-optimization methods to solve  $\{\mathbf{w}, \mathbf{s}\}$ and $\{\alpha_k\}_{k\in \mathcal{K}}$, respectively, whereas $\{\qu_k\}_{k\in \mathcal{K}}$ is obtained as a closed-form solution applying the Rayleigh ratio quotient approach. SDR and slack-optimization are well-developed approaches with provable convergence \cite{so2007approximating, Qingqing, Razaviyayn2013}, ensuring the convergence of our proposed AO algorithm. This claim is also validated by our simulations (Fig.~\ref{conv_fig}).
\end{rem}

\begin{theorem}
Algorithm \ref{alg:AO:maxmin} iterations yield a non-increasing sequence of objective values with guaranteed convergence.
\end{theorem}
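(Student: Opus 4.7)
The plan is to prove both claims with a standard block-coordinate-descent argument, adapted to the fact that sub-problems 1 and 3 use surrogate objectives rather than the main objective. Let $F^{(t)} \triangleq \|\mathbf{w}^{(t)}\|^2 + \tr(\mathbf{S}^{(t)})$. I would track $F^{(t)}$ across the three block updates within one outer iteration of Algorithm \ref{alg:AO:maxmin}, show $F^{(t+1)} \le F^{(t)}$, note that $F^{(t)} \ge 0$ trivially, and then invoke the monotone convergence theorem.

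The central observation is that each surrogate step both preserves feasibility for (P2) and leaves the main objective unchanged. For the $\mathbf{u}_k$-update, the closed-form generalized Rayleigh quotient maximizer in \eqref{opt:uk} yields $\Upsilon_k(\mathbf{u}_k^{(t+1)}) \ge \Upsilon_k(\mathbf{u}_k^{(t)}) \ge \Upsilon_k^{\thr}$, so the sensing constraint \eqref{P1:sinr_tag} is still satisfied; since $\mathbf{u}_k$ does not appear in the objective or in any other constraint of (P2), $F$ is unchanged. For the $\alpha_k$-update via (P8), the previous $\alpha_k^{(t)}$ together with $(t_1,t_2)=(0,0)$ is a feasible point because $\alpha_k^{(t)}$ was already feasible for the constraints of (P2); any optimizer of (P8) therefore meets every constraint of (P2) with nonnegative margin, and again $F$ is preserved. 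The burden of actual decrease thus rests entirely on the $\{\mathbf{w},\mathbf{S}\}$-step.

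For that step, with $\{\mathbf{u}_k^{(t+1)}, \alpha_k^{(t)}\}$ held fixed, the pair $(\mathbf{w}^{(t)}, \mathbf{S}^{(t)})$ remains feasible for (P5), so the SDR relaxation (P6) returns an optimum no larger than $F^{(t)}$. Gaussian randomization followed by selection over $10^5$ candidates yields a rank-one $\mathbf{w}^{(t+1)}$; the update can be forced to satisfy $F^{(t+1)} \le F^{(t)}$ by retaining the prior iterate whenever none of the randomized samples improves the objective. Combined with the obvious lower bound $F^{(t)} \ge 0$, the sequence $\{F^{(t)}\}$ is non-increasing and bounded below, hence convergent.

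The main obstacle is precisely this Gaussian-randomization step: the $\pi/4$-approximation bound pertains to the SDR value, not to per-iteration monotonicity of the recovered rank-one solution, so without care the randomized $\mathbf{w}^{(t+1)}$ could in principle worsen $F$. The safeguard above eliminates that risk at the cost of possibly stalling an iteration, which the AO framework tolerates without affecting the local-optimality remark that precedes the theorem. With that safeguard, the proof reduces to the monotone convergence theorem applied to $\{F^{(t)}\}_{t \ge 0}$, completing both parts of the claim.
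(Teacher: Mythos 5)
Your proof is correct and follows the same overall template as the paper's (Appendix A): a chain of per-block inequalities showing $F^{(t+1)} \le F^{(t)}$, followed by the observation that $F^{(t)} \ge 0$ and an appeal to monotone convergence. The difference lies in how the per-block inequalities are justified, and here your version is the more rigorous one. The paper writes all three steps as if each block update minimizes $F$ over its own variables, e.g.\ $F(\qalpha^{(i)},\qU^{(i+1)},\qP^{(i)}) \leq F(\qalpha^{(i)},\qU^{(i)},\qP^{(i)})$; but since $F=\|\qw\|^2+\tr(\qS)$ does not depend on $\qU$ or $\qalpha$, those two inequalities are trivially equalities, and the real content of those steps --- which the paper leaves implicit --- is that the surrogate objectives (Rayleigh-quotient maximization and slack optimization) return points that remain feasible for (P2), so that the previous $(\qw,\qS)$ is a warm start for (P5) in the next step. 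You make exactly that feasibility-preservation argument explicit, which is what actually licenses the claim that the SDR optimum of (P5) is at most $F^{(t)}$. You also flag the one genuine hole in the paper's argument: Gaussian randomization recovers a rank-one point whose objective can exceed the SDR value, so per-iteration monotonicity is not automatic; your safeguard of retaining the prior iterate when no randomized sample improves $F$ closes that hole at no cost. In short, same skeleton, but your proof repairs two gaps the paper glosses over; the paper's version buys brevity at the price of asserting inequalities whose stated justification does not apply to two of the three blocks.
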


\begin{proof}
   Please see  Appendix \ref{Theorem_1}.
\end{proof}

\subsection{Computational Complexity of the Proposed Algorithm}
 This is analyzed for the three sub-problems. 

\subsubsection{Optimization over $\qu_k$}
During this phase, the optimal received beamformers are obtained using the Rayleigh quotient. Computing the inverse of the matrix $\qQ$ requires $\mathcal{O}(N^3)$. Additionally, the MMSE filter for the $K$ tags (as shown in \eqref{opt:uk}) adds complexity of $\mathcal{O}(KN^2)$. Therefore, the total complexity for this section is $\mathcal{O}(KN^2 + N^3)$.

\subsubsection{Optimization over $\qw$ and $\qS$}
Note that the interior-point method can solve sub-problems based on the SDP. According to \cite[Th. 3.12]{polik2010interior}, the order of complexity for a SDP problem with $m$ SDP constraints which includes a $n \times n$ positive semi-definite (PSD) matrix is given by $\mathcal{O}\left( \sqrt{n} \log\left(\frac{1}{\epsilon}\right) (mn^3 + m^2n^2 + m^3) \right)$, where $\epsilon > 0$ is the solution accuracy. For problem \eqref{P5}, with $n = M$ and $m = 3K + 2$, the approximate computational complexity for solving \eqref{P5} can be written as
$\mathcal{O}\left( KM^3 \sqrt{M} \log\left(\frac{1}{\epsilon}\right) \right)$.

\subsubsection{Optimization over $\alpha_k$}
Utilizing CVX, the optimization leverages the DC and interior point methods. The iterations required for convergence can be expressed as $\frac{\left({\log(C)}/{t^0\delta}\right)}{\log \epsilon}$. Here, $C$ denotes the overall number of constraints. The term $t^0$ signifies the initial approximation for the interior point method's accuracy. The stopping criterion is $0 < \delta \ll 1$ \cite{boyd2004convex}.

\subsubsection{Algorithm \ref{alg:AO:maxmin}}
The computational complexity of each iteration of Algorithm \ref{alg:AO:maxmin} is asymptotically equal to $ \mathcal{O}\left(\!I \!\left(KN^2 \!+ \!N^3 \!+ \! KM^3 \sqrt{M} \log\left(\frac{1}{\epsilon}\right) \!+ \!\frac{\left({\log(C)}/{t^0\delta}\right)}{\log \epsilon}\right)\!\right)\!$, where $I$ is the required number of iterations for the outer algorithm to converge. Despite its higher-order polynomial time complexity, the proposed algorithm demonstrates commendable real-world performance for datasets up to a particular size, predominantly when $N$ and $M$ are maintained below a defined threshold. For extensive datasets, embracing optimization strategies, like parallel processing, can significantly improve the performance \cite{leiserson2010parallel}.

\section{Simulation Results}
We next present simulation results for assessing the performances of the proposed ISABC network and the AO algorithm. 

\subsection{Simulation Setup and Parameters}

The 3GPP urban micro (UMi) model is adopted to model the path-loss \{$\zeta_{a}, \zeta_b$\}  with $f_c =  \qty{3}{\GHz}$ operating frequency \cite[Table B.1.2.1]{3GPP2010}. 

{The considered  {\abc}  leverages the existing RF signals for data transmission. The carrier frequency choice is thus compatible with existing communication infrastructure and aligns with the future-forward vision of {\abc} systems for 5G/6G spectrum utilization trends.}
We model the AWGN variance, $\sigma^2$, as $\sigma^2=10\log_{10}(N_0 B N_f)$ \qty{}{\dB m}, where $N_0=\qty{-174}{\dB m/\Hz}$, $B$ is the bandwidth, and $N_f$ is the noise figure. Unless otherwise specified, Table \ref{table-notations} summarizes the simulation parameters. All simulations are evaluated for \num{1e3} iterations. 

Considering a smart home use case, we place the BS and the mobile reader at $\{0,0\}$ and $\{12,0\}$, respectively, while the tags are randomly distributed within a circle centered at $\{6,-4\}$ with a radius of \qty{3}{\m} \cite{Galappaththige2023, Xu2021NOMA}.

\begin{table}[t]
\renewcommand{\arraystretch}{1.05}
\centering
\caption{Simulation parameters.}
\label{table-notations}
\begin{tabular}{c c c c}    
\hline
\rowcolor{LightGray}
\textbf{Parameter}& \textbf{Value} & \textbf{Parameter}& \textbf{Value}\\  \hline
$f_c$ & \qty{3}{\GHz}  & $\Gamma_u^{\thr}$  & \qty{1}{bps/\Hz}  \\ 
$B$ & \qty{10}{\MHz}  &  $\Gamma_k^{\thr}$ & \qty{1}{bps/\Hz} \\ 
$N_f$ & \qty{10}{\dB}  &  $p_b$  & \qty{-20}{\dB m} \\ 
$M = N $ & \num{8}  &  $M_{\rm{NL}}$  & \qty{20e-3}{\watt}   \\ 
$K$ & \num{3}  &  $a_{\rm{NL}}$ & \num{6400}  \\ 
$\Upsilon_k^{\thr}$ & \qty{1}{bps/\Hz}  &  $b_{\rm{NL}}$ & \num{0.003}  \\ \hline
\end{tabular}
\end{table}

\subsection{Benchmark Schemes}
We denote the proposed ISABC system with passive backscatter tags as `ISABC-P'. For comparative  evaluation purposes,  we consider the following benchmarks:
\subsubsection{Convectional ISAC}\label{sec_con_ISAC}
The first benchmark `ISAC' is conventional ISAC with an FD BS \cite{Zhitong2022, Zhenyao2023}. This model does not include passive tags but conventional radar targets that only reflect incident signals and do not send data to the user. However, the sensing waveform does interfere with the detection process at the user. This sensing waveform is assumed to be perfectly cancelled at the user and the BS \cite{Zhitong2022}. However, the reflected signals by the targets cause interference for the user. 

\subsubsection{Conventional \bc}
This benchmark (`\bc') comprises one user and multiple backscatter tags, and the BS does not perform sensing, i.e., $\mathbf{x} = \mathbf{w}x_d$. The tags perform EH  to power their internal functions while sending data to the user by backscattering the BS RF signals. Thus, this benchmark helps to evaluate the cost of incorporating sensing functions on \bc communication performance.

\subsubsection{Communication-only scheme} 
This benchmark (legend `Com-only') assesses the system's core communication capacity and resilience by focusing solely on a single-user scenario. It establishes a baseline performance metric by isolating communication from sensing and backscattering. Deviations from this metric reveal the impact of added functionalities like sensing or backscattering.

\subsubsection{Sensing-only scheme}
This benchmark (legend `Sensing-only') focuses on a sensing-centric system considering EH, excluding primary communication and \bc. It helps establish a baseline for assessing trade-offs in integrated systems and highlights the system's raw sensing performance, particularly in cases prioritizing sensing over sporadic or secondary communications.

\begin{figure}[!t]\vspace{-0mm}	
\centering
\fontsize{14}{14}\selectfont 
    \resizebox{.55\totalheight}{!}{
%
%
\definecolor{mycolor1}{rgb}{0.00000,0.44700,0.74100}%
\definecolor{mycolor2}{rgb}{0.85000,0.32500,0.09800}%
\definecolor{mycolor3}{rgb}{0.92900,0.69400,0.12500}%
\definecolor{mycolor4}{rgb}{0.49400,0.18400,0.55600}%
\begin{tikzpicture}

\begin{axis}[%
width=5.877in,
height=4.917in,
at={(0.986in,0.664in)},
scale only axis,
xmin=1,
xmax=10,
xtick={ 1,  2,  3,  4,  5,  6,  7,  8,  9, 10},
xlabel style={font=\color{white!15!black}},
xlabel={Number of iterations},
ymin=48,
ymax=62,
ytick={48, 50, ..., 62},
ylabel style={font=\color{white!15!black}},
ylabel={Transmission power (dBm)},
axis background/.style={fill=white},
xmajorgrids,
ymajorgrids,
legend style={legend cell align=left, align=left, draw=white!15!black}
]
\addplot [color=mycolor1, line width=1.5pt, mark size=3.0pt, mark=o, mark options={solid, mycolor1}]
  table[row sep=crcr]{%
1	60.1132\\
2	50.5925\\
3	50.5878\\
4	50.5878\\
5	50.5878\\
6	50.5878\\
7	50.5878\\
8	50.5878\\
9	50.5878\\
10	50.5878\\
};
\addlegendentry{ISABC-P, $K = 1, M = 4$}

\addplot [color=mycolor2, line width=1.5pt, mark size=2.5pt, mark=square, mark options={solid, mycolor2}]
  table[row sep=crcr]{%
1	55.0544\\
2	50.9292\\
3	50.9361\\
4	50.9319\\
5	50.9333\\
6	50.9326\\
7	50.9329\\
8	50.9327\\
9	50.9328\\
10	50.9327\\
};
\addlegendentry{ISABC-P, $K = 3, M = 4$}

\addplot [color=mycolor3, line width=1.5pt, mark size=3.5pt, mark=diamond, mark options={solid, mycolor3}]
  table[row sep=crcr]{%
1	52.8479\\
2	48.4015\\
3	48.4063\\
4	48.4062\\
5	48.4062\\
6	48.4062\\
7	48.4062\\
8	48.4062\\
9	48.4062\\
10	48.4062\\
};
\addlegendentry{ISABC-P, $K = 1, M = 8$}

\addplot [color=mycolor4, line width=1.5pt, mark size=3.5pt,  mark=triangle, mark options={solid, mycolor4}]
  table[row sep=crcr]{%
1	57.1784\\
2	48.7259\\
3	48.7259\\
4	48.7257\\
5	48.7259\\
6	48.7257\\
7	48.7259\\
8	48.7257\\
9	48.7259\\
10	48.7257\\
};
\addlegendentry{ISABC-P, $K = 3, M = 8$}

\end{axis}

\begin{axis}[%
width=7.583in,
height=6.033in,
at={(0in,0in)},
scale only axis,
xmin=0,
xmax=1,
ymin=0,
ymax=1,
axis line style={draw=none},
ticks=none,
axis x line*=bottom,
axis y line*=left
]
\end{axis}
\end{tikzpicture}
    \caption{Convergence rate.}
	\label{conv_fig} \vspace{-0mm}
\end{figure}
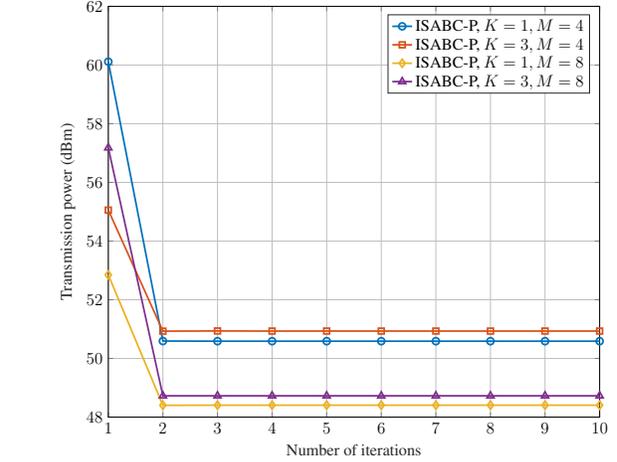

\subsubsection{ISABC with active tags}
Benchmark `ISABC-A' evaluates the performance of the ISABC system with active tags (battery-powered), eliminating the need for  EH requirements at the tags and thus reducing the BS transmit power. This benchmark thus establishes a baseline to gauge the cost to ISABC  of ensuring EH at the tags. 

Our Algorithm \ref{alg:AO:maxmin} accommodates all these benchmarks as special cases. Additionally, we explore three more benchmarks based on the tag's reflection coefficient and  BS beamformers. 
\begin{enumerate}
\item[6)] \textit{Random reflection coefficients:} 
Tags might reflect signals without a set pattern, leading to random reflection coefficients. Such randomness can arise from environmental changes, tag characteristics, or varied communication protocols \cite{Ardakani2022}. In contrast, optimizing reflection coefficients helps select the proper impedances, enhancing rates and ranges. Yet, this optimization requires more computational resources from the reader. For certain cost-sensitive applications, such an approach might not be economical. Thus, this baseline seeks to gauge the tradeoff between optimizing the tag reflection coefficients and not optimizing.

\end{enumerate}

The next two benchmarks are motivated by the following considerations. Algorithm \ref{alg:AO:maxmin} aims to boost the SINR, focusing especially on sub-problems (P4). The outcome of this strategy is the MMSE filter, as shown in \eqref{opt:uk}. This algorithm iteratively refines the MMSE filter, drawing insights from its other two sub-problems. One might consider omitting a sub-problem to streamline this process, possibly by opting for simpler solutions for each $\qu_k,~\forall k\in\mathcal{K}$. Many receivers have leaned towards match filter (MF) and zero-forcing (ZF) combiners due to their simplicity \cite{Azzam2021, Azar_Hakimi}. However, MF struggles with multi-tag interference, while ZF is best suited for high SNR regions because of its sensitivity to noise. To better grasp these concepts, we briefly explore MF and ZF beamformers.
\begin{enumerate} 
\item[7)]\textit{MF beamformer:}  $\qu_{\text{MF}} = \qH_b$, where this leverages the CSI to amplify the received signal's strength. Though advantageous in several scenarios, the MF beamformer's inability to eliminate multi-tag interference is a limitation.
\item[8)] \textit{ZF beamformer:} $\qu_{\rm{ZF}} = \qH_b\left(\qH_b^{\rm{H}} \qH_b\right)^{-1}$, which strives to obliterate interference at non-intended receivers by distinctively utilizing the CSI. It effectively manages interference, but its susceptibility to noise, especially in low-SNR environments,  is challenging.
\end{enumerate}
Elaborating further, $\qH_b\in \mathbb{C}^{N\times K}$ symbolizes the backward channel matrix. Within this matrix, each $k$-th column's vector is articulated as $\mathbf{g}_{b,k}$, $\forall k \in \mathcal{K}$. Interestingly, these combiners rely solely on CSI, in stark contrast to the MMSE filter \eqref{opt:uk}, which is influenced by tag reflection coefficients and the precoder. Utilizing the aforementioned combiners might lead to inferior performance concerning transmit power. Nevertheless, this paves the way for simplifying the three-stage AO approach into a more concise two-stage algorithm, which inherently means a faster algorithmic execution.

\begin{figure}[!t]\vspace{-0mm}	
\centering
\fontsize{14}{14}\selectfont 
    \resizebox{.55\totalheight}{!}{\input{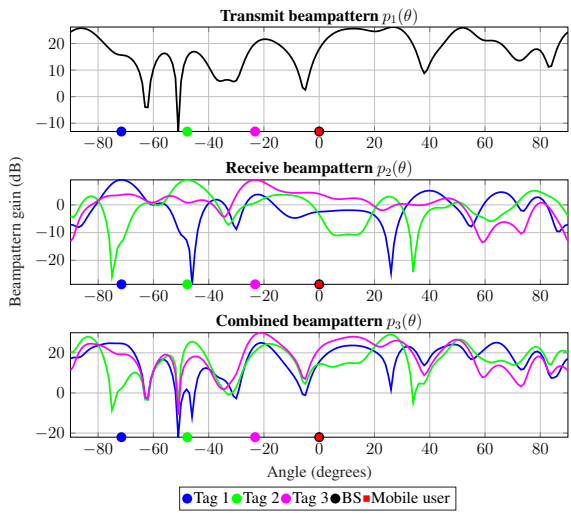}}\vspace{-0mm}
    \caption{Beampattern regarding radar functionality of Algorithm \ref{alg:AO:maxmin}.}
	\label{beam_pro} \vspace{-0mm}
\end{figure}

\begin{figure}[!t]\vspace{-0mm}	
\centering
\fontsize{14}{14}\selectfont 
    \resizebox{.55\totalheight}{!}{\input{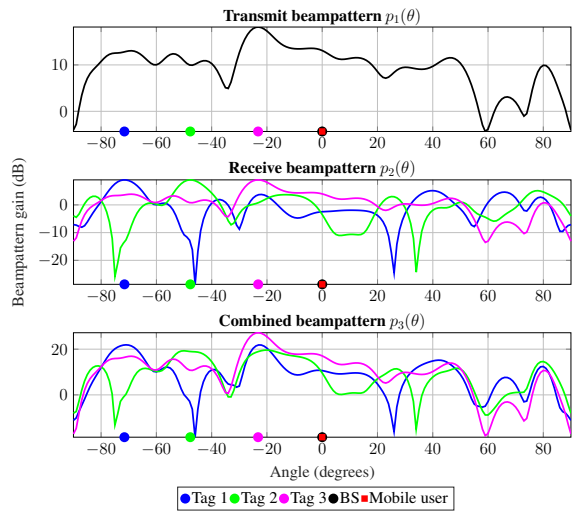}}\vspace{-0mm}
    \caption{Beampattern regarding communication functionality of only communication.}
	\label{beam_sensing} \vspace{-0mm}
\end{figure}

\subsection{Convergence Rate of Algorithm \ref{alg:AO:maxmin}}
This outputs the optimal BS received beamformer, $\{\qu_k\}_{k\in \mathcal{K}}$, BS transmit beamforming for communication and sensing, $\qw$ and $ \qS$, respectively, and the tag reflection coefficients, $\{\alpha_k\}_{k\in \mathcal{K}}$, for a given ISABC system setup within several iterations. The BS transmit power stabilizes at the end of several iterations, which indicates convergence. To measure it, Fig.~\ref{conv_fig} plots the BS transmit power as a function of the number of iterations. It decreases with each iteration until it reaches a fixed value after approximately three iterations, regardless of  $M$ or $K$. This indicates a rapid convergence and verifies the efficacy of the proposed algorithm. Overall, Algorithm \ref{alg:AO:maxmin} only requires three iterations to achieve satisfactory performance with any system configuration, resulting in minor performance enhancements beyond three iterations.

\subsection{Beampattern Gains}
Modern radar functionality has evolved to harness the power of beamforming, i.e.,  directly transmit and receive beams in specific directions. Algorithm \ref{alg:AO:maxmin} serves this function, guiding the formation and steering of these beams. Beamforming in radar systems converges signals from an array of antennas, crafting a directed “beam” or “lobe” \cite{Liu2018Radar, Zhenyao2023}. Intriguingly, this beam can be electronically steered while the antennas remain stationary. This electronic steering capability amplifies signal quality, boosts backscatter tag detection, and significantly minimizes potential interference \cite{Liu2018Radar, Zhenyao2023}. 
 
\begin{figure}[!t]\vspace{-0mm}	
\centering
\fontsize{14}{14}\selectfont 
    \resizebox{.55\totalheight}{!}{
%
%
\definecolor{mycolor1}{rgb}{0.00000,0.44700,0.74100}%
\definecolor{mycolor2}{rgb}{0.85000,0.32500,0.09800}%
\definecolor{mycolor3}{rgb}{0.92900,0.69400,0.12500}%
\definecolor{mycolor4}{rgb}{0.49400,0.18400,0.55600}%
\begin{tikzpicture}

\begin{axis}[%
width=5.877in,
height=4.917in,
at={(0.986in,0.664in)},
scale only axis,
bar shift auto,
xmin=0.509090909090909,
xmax=11.4909090909091,
xlabel style={font=\color{white!15!black}},
xlabel={Number of tags, $K$},
ymin=0,
ymax=18,
ylabel style={font=\color{white!15!black}},
ylabel={Average running time (sec)},
axis background/.style={fill=white},
xmajorgrids,
ymajorgrids,
legend style={at={(0.03,0.97)}, anchor=north west, legend cell align=left, align=left, draw=white!15!black}
]
\addplot[ybar, bar width=0.145, fill=mycolor1, draw=black, area legend] table[row sep=crcr] {%
1	10.2452\\
2	10.5289\\
3	10.614\\
4	10.7517\\
5	11.6698\\
6	13.7508\\
7	14.1954\\
8	15.0886\\
9	16.4131\\
10	16.8123\\
11	17.2123\\
};
\addplot[forget plot, color=white!15!black] table[row sep=crcr] {%
0.509090909090909	0\\
11.4909090909091	0\\
};
\addlegendentry{ISABC-P}

\addplot[ybar, bar width=0.145, fill=mycolor2, draw=black, area legend] table[row sep=crcr] {%
1	1.630996479\\
2	2.163523408\\
3	2.471518008\\
4	1.52248388\\
5	1.265699546\\
6	1.684177543\\
7	1.963964733\\
8	1.960566674\\
9	2.340978285\\
10	2.235436698\\
11	3.821483536\\
};
\addplot[forget plot, color=white!15!black] table[row sep=crcr] {%
0.509090909090909	0\\
11.4909090909091	0\\
};
\addlegendentry{ISABC-A}

\addplot[ybar, bar width=0.145, fill=mycolor3, draw=black, area legend] table[row sep=crcr] {%
1	7.75019717866667\\
2	8.42937812564102\\
3	8.64364007894737\\
4	9.3308527390625\\
5	9.60498249\\
6	9.633957168\\
7	9.75758517894737\\
8	9.97609098666667\\
9	10.1460916958333\\
10	10.6854965703125\\
11	10.76599795\\
};
\addplot[forget plot, color=white!15!black] table[row sep=crcr] {%
0.509090909090909	0\\
11.4909090909091	0\\
};
\addlegendentry{Sensing-only}

\addplot[ybar, bar width=0.145, fill=mycolor4, draw=black, area legend] table[row sep=crcr] {%
1	1.312243857\\
2	1.319299281\\
3	1.407297041\\
4	1.463662292\\
5	1.482405893\\
6	1.5776062040404\\
7	1.653992808\\
8	1.78503151546392\\
9	1.93043212159091\\
10	2.13086855064935\\
11	2.25634646607143\\
};
\addplot[forget plot, color=white!15!black] table[row sep=crcr] {%
0.509090909090909	0\\
11.4909090909091	0\\
};
\addlegendentry{ISAC}

\end{axis}

\begin{axis}[%
width=7.583in,
height=6.033in,
at={(0in,0in)},
scale only axis,
xmin=0,
xmax=1,
ymin=0,
ymax=1,
axis line style={draw=none},
ticks=none,
axis x line*=bottom,
axis y line*=left
]
\end{axis}
\end{tikzpicture}
    \caption{The running time versus the number of tags, $K$. }
	\label{runn_time} \vspace{-5mm}
\end{figure}
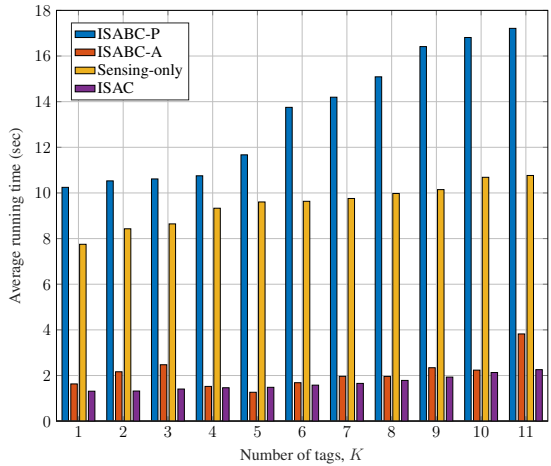

The transmit signal, representing the outward-projected energy, is critical in efficiently illuminating the radar's targets. Meanwhile, the received beamformer, normalized as $\|{\qu^*_k}\| = 1$, is optimized for clear reception, capturing the echoes or reflections off the backscatter tags. Three pivotal beampatterns arise from these components:
\begin{subequations}
\begin{align} 
p_1(\theta) &= \left| \mathbf{b}^{\rm{H}}(\theta_k) \qx^* \right|^2, \label{eq:30} \\ p_2(\theta) &= \left| (\qu_k^*)^{\rm{H}} \mathbf{b}(\theta_k) \right|^2, \label{eq:31} \\
p_3(\theta) &= \left| (\qu_k^*)^{\rm{H}} \mathbf{b}(\theta_k) \mathbf{b}^{\rm{H}}(\theta_k) \qx^* \right|^2. \label{eq:32} 
\end{align}
\end{subequations}
First, \eqref{eq:30} illustrates how the transmitted energy disperses as a function of angle $\theta$. Second, \eqref{eq:31} encapsulates the sensitivity of the radar system across different angles during the reception of reflected energy. Finally, \eqref{eq:32} offers a combined representation, integrating the effects of transmission and subsequent reflection processing.

Figures \ref{beam_pro} and \ref{beam_sensing}  make it easier to discern the nuances and effectiveness of beamforming algorithms. 

\begin{figure}[!t] 
\centering
\fontsize{15}{15}\selectfont 
\begin{subfigure}[b]{0.45\textwidth}
\centering
\fontsize{14}{14}\selectfont 
    \resizebox{.55\totalheight}{!}{
%
%
\definecolor{mycolor1}{rgb}{0.00000,0.44700,0.74100}%
\definecolor{mycolor2}{rgb}{0.85000,0.32500,0.09800}%
\definecolor{mycolor3}{rgb}{0.92900,0.69400,0.12500}%
\definecolor{mycolor4}{rgb}{0.49400,0.18400,0.55600}%
\definecolor{mycolor5}{rgb}{0.46600,0.67400,0.18800}%
\definecolor{mycolor6}{rgb}{0.30100,0.74500,0.93300}%
\begin{tikzpicture}

\begin{axis}[%
width=5.877in,
height=4.917in,
at={(0.986in,0.664in)},
scale only axis,
xmin=5,
xmax=30,
xtick={5, 10, ..., 30},
xlabel style={font=\color{white!15!black}},
xlabel={Number of transmit and recieved antenna at BS, $M = N$},
ymin=20,
ymax=68,
ytick={20, 25, ..., 65},
ylabel style={font=\color{white!15!black}},
ylabel={Transmission power (dBm)},
axis background/.style={fill=white},
xmajorgrids,
ymajorgrids,
legend style={at={(0.975,0.85)}, legend cell align=left, align=left, draw=white!15!black}
]
\addplot [color=mycolor1, line width=1.5pt, mark size=3.0pt, mark=o, mark options={solid, mycolor1}]
  table[row sep=crcr]{%
5	50.9271\\
10	45.9448\\
15	43.2884\\
20	41.641\\
25	40.585\\
30	39.6431\\
};
\addlegendentry{ISABC-P}

\addplot [color=mycolor2, line width=1.5pt, mark size=4.5pt, mark=x, mark options={solid, mycolor2}]
  table[row sep=crcr]{%
5	49.3199\\
10	44.412\\
15	40.4401\\
20	37.7408\\
25	35.8247\\
30	34.2975\\
};
\addlegendentry{\bc}

\addplot [color=mycolor3, line width=1.5pt, mark size=3.5pt, mark=diamond, mark options={solid, mycolor3}]
  table[row sep=crcr]{%
5	32.1241\\
10	30.0473\\
15	28.9464\\
20	27.7876\\
25	25.5243\\
30	23.9829\\
};
\addlegendentry{Sensing-only}

\addplot [color=mycolor4, line width=1.5pt, mark size=3.5pt, mark=triangle, mark options={solid, mycolor4}]
  table[row sep=crcr]{%
5	50.9271\\
10	45.9448\\
15	43.2884\\
20	41.641\\
25	40.585\\
30	39.6431\\
};
\addlegendentry{ZF}

\addplot [color=mycolor5, line width=1.5pt, mark size=2.5pt, mark=square, mark options={solid, mycolor5}]
  table[row sep=crcr]{%
5	50.9271\\
10	45.9448\\
15	43.2884\\
20	41.641\\
25	40.585\\
30	39.6431\\
};
\addlegendentry{MF}

\addplot [color=mycolor6, line width=1.5pt, mark size=4.0pt, mark=star, mark options={solid, mycolor6}]
  table[row sep=crcr]{%
5	64.9283\\
10	64.8143\\
15	64.5801\\
20	64.2839\\
25	64.1618\\
30	63.3423\\
};
\addlegendentry{Random $\alpha$}

\end{axis}

\begin{axis}[%
width=7.583in,
height=6.033in,
at={(0in,0in)},
scale only axis,
xmin=0,
xmax=1,
ymin=0,
ymax=1,
axis line style={draw=none},
ticks=none,
axis x line*=bottom,
axis y line*=left
]
\end{axis}
\end{tikzpicture}
    \caption{}
    \label{subfig:antenna_1}
\end{subfigure}
\hfill 
\begin{subfigure}[b]{0.45\textwidth}
\centering
\fontsize{14}{14}\selectfont 
    \resizebox{.55\totalheight}{!}{
%
%
\definecolor{mycolor1}{rgb}{1.00000,0.00000,1.00000}%
\begin{tikzpicture}

\begin{axis}[%
width=5.877in,
height=4.917in,
at={(0.986in,0.664in)},
scale only axis,
xmin=5,
xmax=30,
xtick={5, 10, ..., 30},
xlabel style={font=\color{white!15!black}},
xlabel={Number of transmit and recieved antenna at BS, $M = N$},
ymin=-40,
ymax=-30,
ytick={-40, -38, ..., -30},
ylabel style={font=\color{white!15!black}},
ylabel={Transmission power (dBm)},
axis background/.style={fill=white},
xmajorgrids,
ymajorgrids,
legend style={legend cell align=left, align=left, draw=white!15!black}
]
\addplot [color=mycolor1, line width=1.5pt, mark size=4.0pt, mark=asterisk, mark options={solid, mycolor1}]
  table[row sep=crcr]{%
5	-30.2158\\
10	-33.7622\\
15	-35.6581\\
20	-36.5001\\
25	-37.3298\\
30	-38.0183\\
};
\addlegendentry{ISABC-A}

\addplot [color=black, line width=1.5pt, mark size=2.5pt, mark=square, mark options={solid, black}]
  table[row sep=crcr]{%
5	-30.2958\\
10	-33.8422\\
15	-35.7381\\
20	-36.5801\\
25	-37.4098\\
30	-38.0983\\
};
\addlegendentry{ISAC}

\addplot [color=green, line width=1.5pt, mark size=3.0pt, mark=o, mark options={solid, green}]
  table[row sep=crcr]{%
5	-32.4719\\
10	-35.4795\\
15	-36.8531\\
20	-37.6564\\
25	-38.3677\\
30	-39.0758\\
};
\addlegendentry{Com-only}

\end{axis}

\begin{axis}[%
width=7.583in,
height=6.033in,
at={(0in,0in)},
scale only axis,
xmin=0,
xmax=1,
ymin=0,
ymax=1,
axis line style={draw=none},
ticks=none,
axis x line*=bottom,
axis y line*=left
]
\end{axis}
\end{tikzpicture}
    \caption{}
    \label{subfig:antenna_2}
\end{subfigure}
\caption{Transmit power versus the number of antennas at the BS, $M=N$, for different schemes.}
\vspace{-5mm}
\label{antenna} 
\end{figure}
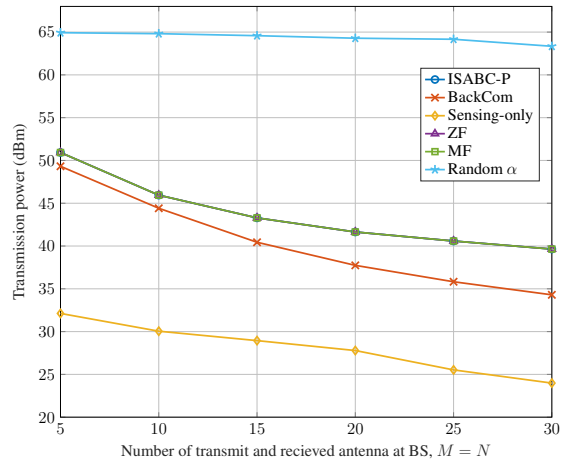
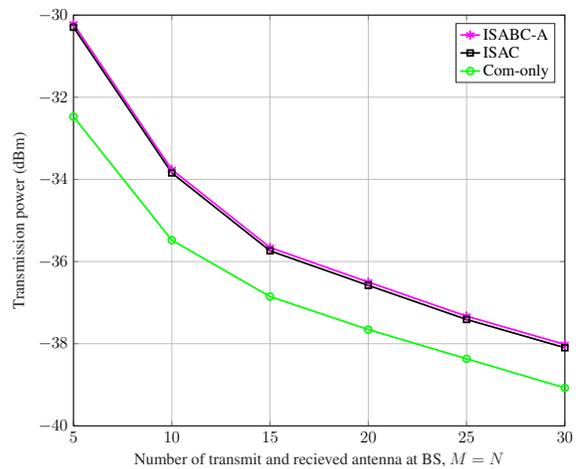

\subsection{Running Time Versus Number of Tags}
Fig. \ref{runn_time} shows the correlation between the execution time and the number of tags, $K$. These data are from  Matlab simulations for an  Intel\textsuperscript{\textregistered} Xeon\textsuperscript{\textregistered} CPU, clocking at \qty{3.5}{\GHz}. Fig. \ref{runn_time} depicts a directly proportional trend between $K$ and the running time for all schemes. This is due to an increase in computation demands as $K$ increases. This tendency emphasizes the complicated challenges of dealing with a larger number of tags, underlining the necessity of efficient algorithms.

Compared to the ISABC-P and sensing-only schemes, the conventional ISAC has a much lower computational time due to the absence of EH constraints, i.e., conventional radar targets do not require EH. The key time-consuming operation in our proposed ISABC-P method is thus satisfying the EH requirements of passive tags. This is evident in the ISABC-A system, which relaxes the EH with battery-powered tags.  For example, ISABC-A only takes \qty{6.76}{\percent} longer than ISAC for algorithm execution with \num{6} tags. Thus, depending on the application scenario, one can select active tags over passive tags at the expense of cost and tag architecture.

\subsection{Transmit Power Versus Number of Antenna at the BS}
We next explore how the  BS transmit power depends on the number of BS transmitter ($M$) and receiver ($N$) antennas, with  $M=N$. As depicted in Fig. \ref{antenna}, a clear and consistent trend emerges across all schemes: increasing $M$ decreases the transmit power. This phenomenon highlights the two benefits of exploiting spatial diversity in  ISABC: increased communication rates and reduced power consumption.
 
Fig. \ref{subfig:antenna_1} and Fig. \ref{subfig:antenna_2} show the BS transmit power requirement as a function of $M$. As per Fig. \ref{subfig:antenna_1}, the random-$\alpha$ benchmark is the most inefficient regarding energy use.  The sensing-only approach requires less transmit power than ISABC-P, as it eliminates the communication performance. Furthermore, sub-optimal MRT- and ZF-based beamformers closely align with our proposed schemes but need CSI. Importantly, ISABC-P allows for sensing with low-cost tags, an essential feature for IoT networks, with only a slight increase in transmit power. For instance, a \qty{3.4}{\percent} increase in transmit power with $M=N=\num{10}$ results in a \qty{75}{\percent} sum rate gain, i.e., user rate + tags' rate + sensing rate.

Fig.~\ref{subfig:antenna_2} reveals that ISAC and purely communication-focused methods require less BS transmit power than our ISABC-P. The latter needs more BS  power to deliver sufficient power at the tags for EH. Of course, this is the cost of using fully passive tags. Nevertheless, utilizing active tags, as in ISABC-A, can obviate the need for this additional power at the BS. For instance, with $M=N=\num{10}$,  ISABC-A only requires a \qty{0.24}{\percent} increase in transmit power to provide a \qty{75}{\percent} sum rate gain over conventional ISAC. On the other hand, active tags with complex tag designs are more expensive than passive tags, and the type of tag used may vary depending on the application.

\begin{figure}[!t]\vspace{-0mm}	
\centering
\fontsize{14}{14}\selectfont 
    \resizebox{.55\totalheight}{!}{
%
%
\definecolor{mycolor1}{rgb}{0.00000,0.44700,0.74100}%
\definecolor{mycolor2}{rgb}{0.85000,0.32500,0.09800}%
\definecolor{mycolor3}{rgb}{0.92900,0.69400,0.12500}%
\definecolor{mycolor4}{rgb}{0.49400,0.18400,0.55600}%
\begin{tikzpicture}

\begin{axis}[%
width=5.877in,
height=4.917in,
at={(0.986in,0.664in)},
scale only axis,
xmin=3,
xmax=15,
xtick={ 3,  6,  9, 12, 15},
xlabel style={font=\color{white!15!black}},
xlabel={Number of tags, $K$},
ymin=-40,
ymax=60,
ylabel style={font=\color{white!15!black}},
ylabel={Transmission power (dBm)},
axis background/.style={fill=white},
xmajorgrids,
ymajorgrids,
legend style={at={(0.975,0.7)}, legend cell align=left, align=left, draw=white!15!black}
]
\addplot [color=mycolor1, line width=1.5pt, mark size=3.0pt, mark=o, mark options={solid, mycolor1}]
  table[row sep=crcr]{%
3	44.8551\\
6	47.0092\\
9	48.4329\\
12	50.9046\\
15	52.3575\\
};
\addlegendentry{ISABC-P}

\addplot [color=mycolor2, line width=1.5pt, mark size=3.5pt, mark=triangle, mark options={solid, mycolor2}]
  table[row sep=crcr]{%
3	-33.2661\\
6	-28.8396\\
9	-26.0763\\
12	-24.2622\\
15	-22.7444\\
};
\addlegendentry{ISABC-A}

\addplot [color=mycolor3, line width=1.5pt, mark size=2.5pt,  mark=square, mark options={solid, mycolor3}]
  table[row sep=crcr]{%
3	-33.7661\\
6	-29.3396\\
9	-26.5763\\
12	-24.7622\\
15	-23.2444\\
};
\addlegendentry{ISAC}

\addplot [color=mycolor4, line width=1.5pt, mark size=3.5pt, mark=diamond, mark options={solid, mycolor4}]
  table[row sep=crcr]{%
3	29.0684\\
6	30.2208\\
9	33.1711\\
12	35.5239\\
15	36.6522\\
};
\addlegendentry{Sensing-only}

\end{axis}

\begin{axis}[%
width=7.583in,
height=6.033in,
at={(0in,0in)},
scale only axis,
xmin=0,
xmax=1,
ymin=0,
ymax=1,
axis line style={draw=none},
ticks=none,
axis x line*=bottom,
axis y line*=left
]
\end{axis}
\end{tikzpicture}
    \caption{Transmit power versus the number of tags, $K$.}
	\label{num_tag} \vspace{-5mm}
\end{figure}
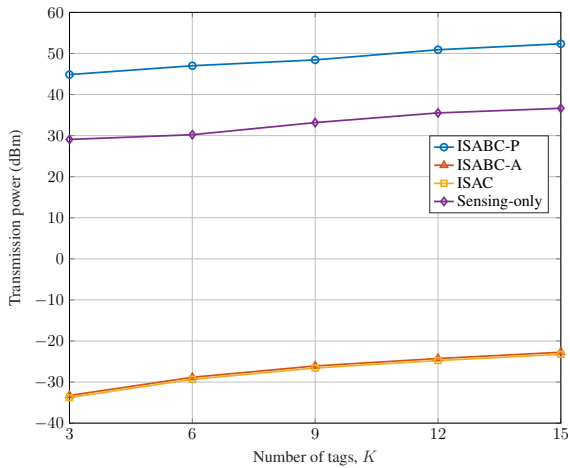

\subsection{Transmit Power Versus Number of Tags}
Fig. \ref{num_tag} delves into the nuanced interplay between the number of tags ($K$) and the BS transmit power requirements. Intuitively, Fig. \ref{num_tag} suggests a linear correlation between these two variables across all schemes. In contrast to conventional ISAC, ISABC-P and sensing-only approaches require the BS to increase power to ensure sufficient EH for the  tags (constraint \eqref{P1:EH}). However, as previously stated, utilizing active tags (ISABC-A) can minimize the increased power demand compared to standard ISAC systems. This comes at a higher tag cost. 

\begin{figure}[!t]\vspace{-0mm}	
\centering
\fontsize{14}{14}\selectfont 
    \resizebox{.55\totalheight}{!}{
%
%
\definecolor{mycolor1}{rgb}{0.00000,0.44700,0.74100}%
\definecolor{mycolor2}{rgb}{0.85000,0.32500,0.09800}%
\definecolor{mycolor3}{rgb}{0.92900,0.69400,0.12500}%
\definecolor{mycolor4}{rgb}{0.49400,0.18400,0.55600}%
\begin{tikzpicture}

\begin{axis}[%
width=5.877in,
height=4.917in,
at={(0.986in,0.664in)},
scale only axis,
xmin=1,
xmax=9,
xtick={1, 3, ..., 9},
xlabel style={font=\color{white!15!black}},
xlabel={Targeted SINR of reader/user, $\Gamma_u^{th}$},
ymin=-40,
ymax=60,
ylabel style={font=\color{white!15!black}},
ylabel={Transmission power (dBm)},
axis background/.style={fill=white},
xmajorgrids,
ymajorgrids,
legend style={at={(0.97,0.5)}, anchor=east, legend cell align=left, align=left, draw=white!15!black}
]
\addplot [color=mycolor1, line width=1.5pt, mark size=3.0pt, mark=o, mark options={solid, mycolor1}]
  table[row sep=crcr]{%
1	48.8901\\
3	49.9113\\
5	51.923\\
7	53.2558\\
9	54.5558\\
};
\addlegendentry{ISABC-P}

\addplot [color=mycolor2, line width=1.5pt, mark size=3.5pt, mark=triangle, mark options={solid, mycolor2}]
  table[row sep=crcr]{%
1	-33.2661\\
3	-26.7457\\
5	-24.15753\\
7	-22.75363\\
9	-21.5613\\
};
\addlegendentry{ISABC-A}

\addplot [color=mycolor3, line width=1.5pt, mark size=2.5pt, mark=square, mark options={solid, mycolor3}]
  table[row sep=crcr]{%
1	-34.5\\
3	-30.7457\\
5	-27.2653\\
7	-26.6363\\
9	-24.8613\\
};
\addlegendentry{Com-only}

\addplot [color=mycolor4, line width=1.5pt, mark size=3.5pt,  mark=diamond, mark options={solid, mycolor4}]
  table[row sep=crcr]{%
1	31.1\\
3	31.1\\
5	31.1\\
7	31.1\\
9	31.1\\
};
\addlegendentry{Sensing-only}

\end{axis}

\begin{axis}[%
width=7.583in,
height=6.033in,
at={(0in,0in)},
scale only axis,
xmin=0,
xmax=1,
ymin=0,
ymax=1,
axis line style={draw=none},
ticks=none,
axis x line*=bottom,
axis y line*=left
]
\end{axis}
\end{tikzpicture}
    \caption{Transmit power versus the targeted SINR (rate) of the reader/user, $\Gamma_u^{\thr}$. }
	\label{min_rate_fig} \vspace{-5mm}
\end{figure}
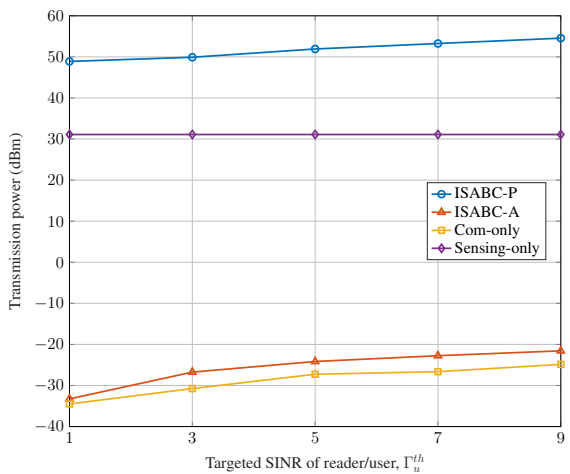

\subsection{Transmit Power Versus User SINR  Requirement} 
In Fig. \ref{min_rate_fig}, we examine the relationship between the BS transmit power and the targeted user SINR, $\Gamma_u^{\thr}$. 
This captures the impact of variations in the targeted user rate on the BS transmit power.
Benchmarks for communication-only, sensing-only, and  ISABC-P/A schemes are also plotted to offer a comprehensive comparison.

The BS requires minimal power in the communication-only benchmark to meet the user's rate demand. The BS maintains consistent power for EH and sensing rates without communication functionality in the sensing-only approach with passive tags. In the proposed ISABC-P, the BS requires higher transmit power, supporting primary communication and sensing services vital for future ambient-powered IoT networks. Alternatively, active tags (ISABC-A) can eliminate the higher power requirement at the BS, although active tags are costlier and more complex to maintain than passive tags \cite{Diluka2022, Rezaei2023Coding}.

\subsection{Communication Impairments}
{The performance and reliability of wireless links can be hampered by impairments, particularly CSI and SI cancellation errors. CSI discrepancies affect processes like signal reception and beamforming, arising from inaccuracies in channel state estimation. On the other hand, SI cancellation errors occur when SI is not eliminated, hindering incoming signal reception. Both errors critically impair communication dynamics \mbox{\cite{Shengli2004}}.  }

{Due to the significance of these errors, we explore their detailed ramifications on Algorithm \mbox{\ref{alg:AO:maxmin}}. Fig. \mbox{\ref{csi_imp}} examines the consequences of imperfect CSI and SI cancellation on  Algorithm \mbox{\ref{alg:AO:maxmin}}.  The relation $\hat{x} = x + e$ models the channel estimation process. Here, the true channel is denoted by $x\in \{f_m, v_k\}$, $\forall m\in \{1,\ldots, M\}, \forall k \in \mathcal{K}$, and the term $e$ characterizes the Gaussian-distributed estimation error with zero mean, mathematically expressed as $ e \sim \mathcal{N} (0, \sigma_e^2)$. A pivotal parameter in this context is the error variance, which adheres to the inequality $\sigma_e^2  \triangleq \eta |x|^2$. Here, the coefficient $\eta$ serves as a metric to gauge the magnitude of CSI error. Fig. \mbox{\ref{csi_imp}} displays the correlation between the transmit power and $\eta$. As one increases, so does the other, requiring more transmit power across all communication schemes.}

{Shifting our attention to the nuances of imperfect SI cancellation, it is essential to understand and counteract its effects. We replace the residual SI term, given by $\lambda \vert \mathbf{f}^{\rm{H}} \mathbf{w}\vert^2$, into the SINR of $T_k$. The variable $\lambda\in[0, 1]$ indicates the degree of imperfection in the SI cancellation process. Further experiments shed light on the transmission power in scenarios with varied values of residual SI, especially under the influence of distinct CSI errors. The transmit power is sacrificed as $\lambda$ increases.}

\begin{figure}[!t]\vspace{-0mm}	
\centering
\fontsize{14}{14}\selectfont 
    \resizebox{.55\totalheight}{!}{
%
%
\definecolor{mycolor1}{rgb}{0.00000,0.44700,0.74100}%
\definecolor{mycolor2}{rgb}{0.85000,0.32500,0.09800}%
\definecolor{mycolor3}{rgb}{0.92900,0.69400,0.12500}%
\definecolor{mycolor4}{rgb}{0.49400,0.18400,0.55600}%
\begin{tikzpicture}

\begin{axis}[%
width=5.877in,
height=4.917in,
at={(0.986in,0.664in)},
scale only axis,
xmin=0,
xmax=1,
xtick={0, 0.2, ..., 1},
xlabel style={font=\color{white!15!black}},
xlabel={CSI imperfection, $\eta$},
ymin=48,
ymax=48.8,
ylabel style={font=\color{white!15!black}},
ylabel={Transmission power (dBm)},
axis background/.style={fill=white},
xmajorgrids,
ymajorgrids,
legend style={at={(0.03,0.97)}, anchor=north west, legend cell align=left, align=left, draw=white!15!black}
]
\addplot [color=mycolor1, line width=1.5pt, mark size=3.0pt, mark=o, mark options={solid, mycolor1}]
  table[row sep=crcr]{%
0	48.31\\
0.2	48.42\\
0.4	48.5166\\
0.6	48.6093\\
0.8	48.6961\\
1	48.7827\\
};
\addlegendentry{ISABC-P, $\lambda$ = -80 dB}

\addplot [color=mycolor2, line width=1.5pt, mark size=2.5pt,  mark=square, mark options={solid, mycolor2}]
  table[row sep=crcr]{%
0	48.269\\
0.2	48.38\\
0.4	48.4794\\
0.6	48.5685\\
0.8	48.66204\\
1	48.7556\\
};
\addlegendentry{ISABC-P, $\lambda$ = -90 dB}

\addplot [color=mycolor3, line width=1.5pt, mark size=3.5pt, mark=diamond, mark options={solid, mycolor3}]
  table[row sep=crcr]{%
0	48.2268\\
0.2	48.3301\\
0.4	48.424\\
0.6	48.52\\
0.8	48.61\\
1	48.7\\
};
\addlegendentry{ISABC-P, $\lambda$ = -110 dB}

\addplot [color=mycolor4, line width=1.5pt, mark size=3.5pt, mark=triangle, mark options={solid, mycolor4}]
  table[row sep=crcr]{%
0	48.0916\\
0.2	48.1921\\
0.4	48.297\\
0.6	48.400912\\
0.8	48.4994\\
1	48.5948\\
};
\addlegendentry{ISABC-P, perfect SI cancellation}

\end{axis}

\begin{axis}[%
width=7.583in,
height=6.033in,
at={(0in,0in)},
scale only axis,
xmin=0,
xmax=1,
ymin=0,
ymax=1,
axis line style={draw=none},
ticks=none,
axis x line*=bottom,
axis y line*=left
]
\end{axis}
\end{tikzpicture}
    \caption{{Transmit power versus CSI imperfection $\eta$, for various residual SIC values. }}
	\label{csi_imp} \vspace{-5mm}
\end{figure}
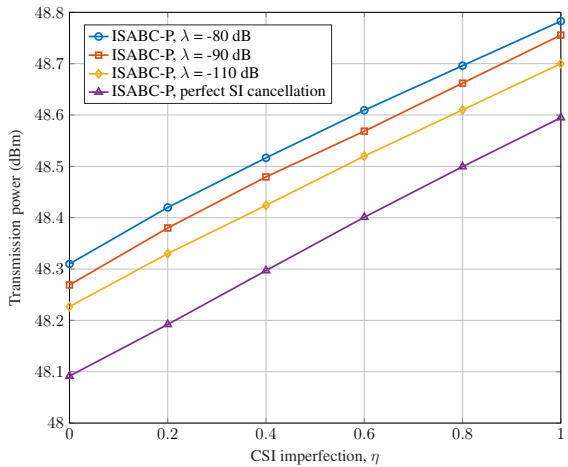

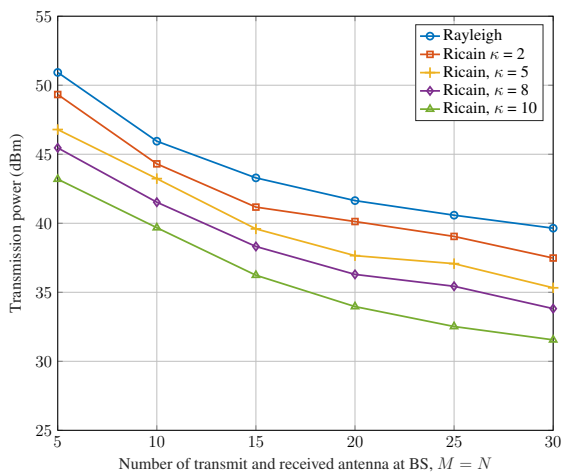
\begin{figure}[!t]\vspace{-0mm}	
\centering
\fontsize{14}{14}\selectfont 
    \resizebox{.55\totalheight}{!}{
%
%
\definecolor{mycolor1}{rgb}{0.00000,0.44700,0.74100}%
\definecolor{mycolor2}{rgb}{0.85000,0.32500,0.09800}%
\definecolor{mycolor3}{rgb}{0.92900,0.69400,0.12500}%
\definecolor{mycolor4}{rgb}{0.49400,0.18400,0.55600}%
\definecolor{mycolor5}{rgb}{0.46600,0.67400,0.18800}%
\begin{tikzpicture}

\begin{axis}[%
width=5.877in,
height=4.916in,
at={(0.972in,0.66in)},
scale only axis,
xmin=5,
xmax=30,
xtick={5, 10, ..., 30},
xlabel style={font=\color{white!15!black}},
xlabel={Number of transmit and received antenna at BS, $M = N$},
ymin=25,
ymax=55,
ytick={25, 30, ..., 55},
ylabel style={font=\color{white!15!black}},
ylabel={Transmission power (dBm)},
axis background/.style={fill=white},
xmajorgrids,
ymajorgrids,
legend style={legend cell align=left, align=left, draw=white!15!black}
]
\addplot [color=mycolor1, line width=1.5pt, mark size=3.0pt, mark=o, mark options={solid, mycolor1}]
  table[row sep=crcr]{%
5	50.9271\\
10	45.9448\\
15	43.2884\\
20	41.641\\
25	40.585\\
30	39.6431\\
};
\addlegendentry{Rayleigh}

\addplot [color=mycolor2, line width=1.5pt, mark size=2.5pt,  mark=square, mark options={solid, mycolor2}]
  table[row sep=crcr]{%
5	49.3279\\
10	44.305\\
15	41.1705\\
20	40.1253\\
25	39.047\\
30	37.481\\
};
\addlegendentry{$\text{Ricain }\kappa\text{ = 2}$}

\addplot [color=mycolor3, line width=1.5pt, mark size=4.5pt, mark=+, mark options={solid, mycolor3}]
  table[row sep=crcr]{%
5	46.79052\\
10	43.2328\\
15	39.5933\\
20	37.6533\\
25	37.0741\\
30	35.3246\\
};
\addlegendentry{$\text{Ricain, }\kappa\text{ = 5}$}

\addplot [color=mycolor4, line width=1.5pt, mark size=3.5pt, mark=diamond, mark options={solid, mycolor4}]
  table[row sep=crcr]{%
5	45.4624\\
10	41.5257\\
15	38.3178\\
20	36.2953\\
25	35.4356\\
30	33.8162\\
};
\addlegendentry{$\text{Ricain, }\kappa\text{ = 8}$}

\addplot [color=mycolor5, line width=1.5pt, mark size=3.5pt, mark=triangle, mark options={solid, mycolor5}]
  table[row sep=crcr]{%
5	43.1941\\
10	39.6785\\
15	36.2354\\
20	33.9641\\
25	32.5201\\
30	31.5536\\
};
\addlegendentry{$\text{Ricain, }\kappa\text{ = 10}$}

\end{axis}

\begin{axis}[%
width=7.479in,
height=6in,
at={(0in,0in)},
scale only axis,
xmin=0,
xmax=1,
ymin=0,
ymax=1,
axis line style={draw=none},
ticks=none,
axis x line*=bottom,
axis y line*=left
]
\end{axis}
\end{tikzpicture}
    \caption{{Transmit power for different fading channels.}}
	\label{Rician_fig} \vspace{-5mm}
\end{figure}

{In Fig. {\ref{Rician_fig}}, the performance evaluation under varying channel models is depicted for our proposed scheme. In particular, we evaluate the transmit power performance for Rayleigh and Rician fading channels with different Rician factors ($\kappa$). The Rician fading channel model is represented as}
\begin{subequations}
\begin{align}
    \qf  & = \sqrt{\frac{\kappa}{\kappa+1}} \qf^{\text{LoS}} + \sqrt{\frac{1}{\kappa+1}} \qf^{\text{NLoS}},  \\
    v_k & = \sqrt{\frac{\kappa}{\kappa+1}} v_k^{\text{LoS}} + \sqrt{\frac{1}{\kappa+1}} v_k^{\text{NLoS}},
\end{align}
\end{subequations}
{where $v_k^{\text{LoS}}=1$ and $\qf^{\text{LoS}}$ which is based on (2) are the deterministic LoS components that correspond to the direct path between the transmitter and receiver. Also, $\qf^{\text{NLoS}}$ and $v_k^{\text{NLoS}}$ are the non-LoS (NLoS) components that follow the Rayleigh fading model. As shown in Fig. {\ref{Rician_fig}}, high $\kappa$ factors lead to lower transmission power than Rayleigh channels. For example, channels with Rician factors ${\kappa=\num{2}}$ and ${\kappa=\num{5}}$ result in a {\qty{3.64}{\percent}} and {\qty{3.64}{\percent}} reduction in transmit power with ${M=N=\num{20}}$, respectively, when compared to Rayleigh channels. This is due to the LoS component in the propagation of the signal, which contributes to more efficient power utilization.}

\section{Conclusion}In this study, we introduced the concept of ISABC, an innovative system that combines \bc and sensing with an FD BS. The FD BS plays a dual role, serving as a sensor through backscatter tag detection and enabling user communication. Using the AO technique, we derived precise communication (both primary and backscatter) and sensing rates, focusing on minimizing power consumption at the BS. This approach empowers passive and active tags for sensing and communication, opening doors to new possibilities. Future research can explore advanced modulation and machine learning to enhance ISABC's adaptability in urban environments, aligning with IoT and 6G technologies for comprehensive networks. Addressing multi-user scenarios, interference mitigation, and practical hardware implementation are crucial for real-world advancements.

\appendices

\section{Proof of Theorem 1}\label{Theorem_1}
Recall that we divide (P1)   into three sub-problems to  optimize  $\left(\qalpha := \{\alpha_k\}_{k\in \mathcal{K}}\right)$, received beamformers $\left( \qU :=\{\qu_k\}_{k\in \mathcal{K}} \right)$, and transmit beamformers $\left(\qP= \{\qw, \qS\}\right)$, via solving problems \eqref{opt:uk}, \eqref{P5}, and \eqref{P7}, while keeping the other two blocks of variables fixed. Let us define $F(\qU, \qalpha, \qP)$ as a function of $\qU$, $\qalpha$, and $\qP$ for the objective value of \eqref{P2}. First, in step $3$ of Algorithm \ref{alg:AO:maxmin} with fixed variables $\qalpha^{(i)}$ and $\qP^{(i)}$, $\qU^{(i+1)}$ is the optimal solution that minimizes the value of the objective function. Accordingly, we have
\begin{equation}\label{43}
F(\qalpha^{(i)},\qU^{(i+1)},\qP^{(i)}) \leq F(\qalpha^{(i)},\qU^{(i)},\qP^{(i)}).
\end{equation}
Next, in step $4$ of Algorithm \ref{alg:AO:maxmin}, $\qP^{(i+1)}$ is the optimal transmit beamformers with given variables $\qalpha^{(i)}$ and $\qU^{(i+1)}$ to minimize $F$ via solving \eqref{P5}. Thus, it guarantees that
\begin{equation}
F(\qalpha^{(i)},\qU^{(i+1)},\qP^{(i+1)}) \leq F(\qalpha^{(i)},\qU^{(i+1)},\qP^{(i)}).
\end{equation}
Finally, in step $5$ of Algorithm \ref{alg:AO:maxmin} with the given $\qP^{(i+1)}$ and  $\qU^{(i+1)}$, problem \eqref{P7} is solved to obtain an optimal solution for $\qalpha^{(i)}$, which yields:
\begin{equation}\label{45}
F(\qalpha^{(i+1)},\qU^{(i+1)},\qP^{(i+1)}) \leq F(\qalpha^{(i)},\qU^{(i+1)},\qP^{(i+1)}).
\end{equation}
According to \eqref{43}--\eqref{45}, we can conclude that \cite{ZARGARI2021101413,Shayan_Zargari}:
\begin{equation} 
F(\qalpha^{(i+1)},\qU^{(i+1)},\qP^{(i+1)}) \leq F(\qalpha^{(i)},\qU^{(i)},\qP^{(i)}).
\end{equation}
For problem \eqref{P2}, the objective values of Algorithm \ref{alg:AO:maxmin} monotonically decrease with each iteration, always remaining non-negative. This consistency, combined with the design choice where each iteration starts from the previous one's end, ensures the convergence of the algorithm. Essentially, the objective function will either decrease or stay the same until it meets the convergence criteria, resulting in a stable solution. Thus, the proof is completed.

\bibliographystyle{IEEEtran}
\bibliography{IEEEabrv,ref}

\end{document}